\newtheorem{theorem}{Theorem}
\newtheorem{lemma}[theorem]{Lemma}
\newtheorem{conjecture}[theorem]{Conjecture}
\def\IR{\mathbb{R}}
\def\bE{\bm{E}}
\def\bF{\mathbf{F}}
\def\bW{\mathbf{W}}
\def\bX{\mathbf{X}}
\def\bu{\mathbf{u}}
\def\bx{\mathbf{x}}
\def\by{\mathbf{y}}
\def\bz{\bm{z}}
\def\b0{\bm{0}}
\def\R{\mathcal{R}}
\def\diag{\mathsf{diag}}
\tikzstyle{cloud} = [draw, ellipse,fill=red!20, node distance=0.87cm,
\tikzstyle{line} = [draw, -latex']
\def\diag{\mathsf{diag}}
\def\dbracket#1{\left\llbracket#1\right\rrbracket}
\title{Transmission of multiple pathogens across species}
\author{Clotilde Djuikem \and Julien Arino}
\begin{document}



\maketitle

\begin{abstract}

We analyse a model that describes the propagation of many pathogens within and between many species. 
A branching process approximation is used to compute the probability of disease outbreaks.
Special cases of aquatic environments with two host species and one or two pathogens are considered both analytically and computationally.

\noindent
\textbf{Keywords:} multiple species--multiple pathogens, branching process approximation, introductions
\end{abstract}

\section{Introduction}

The ranges of species are continuously changing \cite{kirkpatrick1997evolution,sexton2009evolution}. 
However, the process has accelerated in recent years because of climate change \cite{atkins2010local,parmesan2006ecological}.
Regardless of what is driving their evolution, a consequence of the modification of ranges is more frequent interactions between species that did not use to interact or interacted quite infrequently.

This has a wide variety of consequences.
Competition for resources is modified if an invading species is, for instance, using the same resource as a resident one.
This is thought to be one of the main drivers of species evolution \cite{phillips2010life}.
Range shifting can also lead to the introduction into ecosystems of pathogens from which they were absent, when species whose range now includes these ecosystems become more frequent there \cite{carlson2022climate}.
Introductions of pathogens due to range shifting is also very similar to what happens when human populations encroach into the ranges of species \cite{ellwanger2021zoonotic}, which has led to an increasing number of spillover events \cite{meadows2023historical}.

In both cases, some of the populations involved may be hosts to a wide variety of pathogens. 
Understanding a situation with different pathogens and different species is therefore important.

The specific motivation for the present work comes from the observation that salmonids are observed increasingly frequently the Mackenzie River, in the western Canadian arctic.
Of interest to collaborators from Fisheries and Oceans Canada (see \emph{Acknowledgments}) is the fact that these vagrant salmon species spend most of their lives in distant ecosystems, where they can acquire pathogens that are, to this point, mostly absent from the Mackenzie River aquatic ecosystem.
When they are collocated in that ecosystem, those vagrant species can in turn transmit those novel pathogens to resident species.

While pathogens abound in terrestrial ecosystems, the situation is even more pronounced in aquatic ecosystems, where numerous pathogens are present \cite{bergh1989high,wommack2000virioplankton,wyn2001enteric}.
Viruses, for instance, are estimated to be the most abundant ``lifeforms'' in the oceans, representing over 90\% of the nucleic-acid-containing particles and about 5\% of the biomass there \cite{suttle2007marine}.
Many aquatic pathogens infect fish species, so that the invading species mentioned earlier may be coming into contact with a wide variety of pathogens prior to their entering a novel ecosystem.

Our aim is therefore to establish models to help understand the introduction of pathogens in species from which they were absent up to that point, when these species come into contact with other species potentially bearing the pathogen.
The model in this paper is a simplified model and serves to set the general setting in which we operate.
We use a simple SLIR model, whose dynamics in a single location and single population is well understood, but assume that there are multiple species of hosts as well as multiple pathogen species.
We also assume that there is no coinfection with multiple pathogens.

The article is organised as follows. 
In Section~\ref{sec:multi-species}, an ordinary differential equations (ODE) multi-species epidemiological model is introduced, followed by its continuous time Markov chain (CTMC) equivalent.
The section also presents an analysis of both the ODE and CTMC models, the latter using a branching process approximation to compute the probability of a disease outbreak.
Section~\ref{sec:2p1v} focuses on the case of two species and one pathogen, with example scenarios corresponding to three different fish viruses investigated numerically, as well as a particular case focusing on introduction of a pathogen by a species in which it is endemic.
In Section~\ref{sec:2p2v}, the case of two species and two pathogens is discussed.

\section{The general model}
\label{sec:multi-species}

Consider $P$ populations. 
These populations could be the same or different species, the important feature being that they be distinguishable according to some criterion.
In the sequel, we use both terms, \emph{species} and \emph{population}, interchangeably.
Within and between these populations, $V$ pathogens can propagate.
Specifically, each population is described by an SLIR epidemic model, where susceptible individuals in a given population can become infected by any of the $V$ viruses if they come into contact with an individual infected by it, regardless of the population that individual belongs to.
Further, we assume that coinfection does not occur, i.e., once infected by a given pathogen, an individual cannot acquire infection from another pathogen.

\subsection{Formulation of the deterministic model}
For $p=1,\ldots,P$, denote $S_p$ the number of individuals susceptible to infection in the $p$th population. 
Individuals of species $p$ may become infected by any of the $V$ pathogens present when they come across an individual infectious with that pathogen.
The parameter describing the rate at which contacts between a susceptible from species $p$ and an infectious individual from species $q=1,\ldots,P$ infected with pathogen $v=1,\ldots,V$, results in new infections, is $\beta_{pqv}$, i.e., in words, the parameter $\beta$ has indices
\begin{equation*}
\beta_{\text{who becomes infected, who infects, with which pathogen}}.
\end{equation*}

Upon infection, individuals of species $p=1,\ldots,P$ infected by pathogen $v=1,\ldots,V$ become latently infected, with numbers denoted $L_{pv}$.
We do not consider coinfection with multiple pathogens; once an individual is contaminated with any of the viruses, they cannot become infected by any other pathogen.

After an incubation period of mean duration $1/\varepsilon_{pv}$ time units, individuals of species $p$ infected with pathogen $v$ become infectious to others. The number of such infectious individuals is denoted $I_{pv}$.
Finally, after an average $1/\gamma_{pv}$ time units spent infectious with the pathogen, individuals recover and move to the $R_p$ compartment. 
At this point, the pathogen they were infected with is ignored as it is not relevant to the problem under consideration.
Regarding species demography, birth into population $p=1,\ldots, P$ occurs at the fixed rate $b_p$, while death occurs in all compartments at the \emph{per capita} rate $d_p$. 

Taking all this into account, we have a group model,  with dynamics of the different states governed for population $p=1,\ldots P$ and pathogen $v=1,\ldots,V$ by the following ordinary differential equations:
\begin{subequations}\label{sys:ODE-multi-species}
    \begin{align}
        &\dot S_p=b_p-\left(\sum_{q=1}^{P}\sum_{v=1}^{V}\beta_{pqv}I_{qv}+d_p \right)S_p, \label{sys:multi-species_dS} \\
        & \dot L_{pv}=\sum_{q=1}^{P}\beta_{pqv}I_{qv} S_p-(\varepsilon_{pv}+d_p)L_{pv}, \label{sys:multi-species_dL}\\
        &\dot I_{pv}=\varepsilon_{pv}L_{pv}-(\gamma_{pv}+d_p)I_{pv}, \label{sys:multi-species_dI}\\
        & \dot R_p=\sum_{v=1}^{V}\gamma_{pv}I_{pv}-d_pR_{p}. \label{sys:multi-species_dR} 
    \end{align}
\end{subequations}
System \eqref{sys:ODE-multi-species} is considered with nonnegative initial conditions. 
To avoid a trivial case, it is assumed that $L_{pv}+I_{pv}>0$ for at least one $(p,v)\in\{1,\ldots,P\}\times\{1,\ldots,V\}$.
The total size of each population $p$ for $i=1,\ldots, P$  is given by:
\begin{equation}\label{eq:total_pop_pop}
N_p=S_p+\sum_{v=1}^{V}L_{pv}+\sum_{v=1}^{V}I_{pv}+R_p.
\end{equation}

\begin{table}
    \centering
    \begin{tabular}{cl}
    \toprule
         Variable & Meaning \\
    \midrule
       $S_p$  &  Susceptible individual in population $p$\\
       $L_{pv}$  & Latent individual in population $p$ infected by virus $v$ \\
       $I_{pv}$  & Individual in population $p$ infectious with virus $v$\\
       $R_p$ & Recovered individual in population $p$ \\
    \bottomrule
    \end{tabular}
    \caption{State variables and their meaning.}
    \label{tab:state_variables}
\end{table}

\subsection{Notation}
Equations \eqref{sys:multi-species_dL} and \eqref{sys:multi-species_dI} involve two different indices. 
Analysis of the system often requires to list these indices.
To simplify presentation, for given symbols $X$ and $Y$, we use the notation
\[
    \dbracket{X_p}=X_1,\ldots,X_P,
\]
\[
    \dbracket{X_{pv}}= X_{11}, X_{12},\ldots, X_{1V}, X_{21},X_{22},\ldots,X_{2V},\ldots,X_{P1},X_{P2},\ldots,X_{PV}
\]
and
\begin{multline*}
    \dbracket{X_{pv}+Y_p}= X_{11}+Y_1, X_{12}+Y_1,\ldots, X_{1V}+Y_1,\\
    X_{21}+Y_2,X_{22}+Y_2,\ldots,X_{2V}+Y_2,\ldots, \\
    X_{P1}+Y_P,X_{P2}+Y_P,\ldots,X_{PV}+Y_P.
\end{multline*}
Thus, when multiple indices are present, we present indices as would the row-first enumeration of indices of the entries of a $P\times V$ matrix.
Note that the assumption is that indices $p$ and $v$ are reserved, respectively, for population and virus species indices and therefore run in $1,\ldots,P$ and $1,\ldots,V$.

\subsection{Basic analysis of the deterministic model}
\label{sec:basic_analysis_ODE}
The disease-free equilibrium (DFE) of system~\eqref{sys:ODE-multi-species} is 
\begin{equation}\label{eq:DFE}
\bE_0^{\eqref{sys:ODE-multi-species}}=
\left(\dbracket{S_p^0},0_{\mathbb{R}^{P(2V+1)}}\right),
\end{equation}
where $S_p^0=b_p/d_p$ for $p=1, \ldots, P$. Note that for equilibria as well as for the basic reproduction number, we use a superscript to refer to the specific form of the system that is being considered.

To determine the matrices used in the computation of the basic reproduction number using the next generation matrix method of \cite{VdDWatmough2002}, order infected variables by type: $\dbracket{L_{pv}},\dbracket{I_{pv}}$.
Then the non-negative $2PV\times 2PV$-matrix $\mathbf{G}$ has block form
\begin{equation}\label{eq:matrix_F}
\mathbf{G}=\begin{bmatrix}
0 & \mathbf{G}_{12} \\
0 & 0 \\
\end{bmatrix},
\end{equation}
where the $PV\times PV$-matrix $\mathbf{G}_{12}$ is itself a block matrix, with each $V\times V$ sized block taking the form, for $p,q\in\{1,\ldots,P\}$,
\begin{equation}\label{eq:matrices_G}
\mathbf{G}_{12}^{pq}=S_p^0\diag(\beta_{pq1},\ldots, \beta_{pqV}).
\end{equation}

The matrix $\mathbf{W}$ is a non-negative $2PV\times 2PV$-matrix and has block form
\begin{equation}\label{eq:matrix_W}
\mathbf{W}=\begin{bmatrix}
\mathbf{W}_{11}& \mathbf{0}\\
-\mathbf{W}_{21}& \mathbf{W}_{22} \\
\end{bmatrix},
\end{equation}
with the $PV\times PV$-sized block taking the form
\begin{equation}
\label{eq:matrices_W}
\mathbf{W}_{11}=\diag(\llbracket\varepsilon_{pv}+d_p\rrbracket), 
\mathbf{W}_{21}=\diag(\llbracket\varepsilon_{pv}\rrbracket)
\text{ and }
\mathbf{W}_{22}=\diag(\llbracket\gamma_{pv}+d_p\rrbracket).    
\end{equation}

The basic reproduction number of \eqref{sys:ODE-multi-species} is then the spectral radius of $\mathbf{GW}^{-1}$ and is given by
$\mathcal{R}_0^{\eqref{sys:ODE-multi-species}}
=\rho(\mathbf{GW^{-1}})$.
Since $\mathbf{W}$ is block lower triangular, 
\[  
\mathbf{W^{-1}}=   
\begin{bmatrix} 
\mathbf{W}_{11}^{-1}& \mathbf{0}\\
\mathbf{W}_{22}^{-1}\mathbf{W}_{21}\mathbf{W}_{11}^{-1}& \mathbf{W}_{22}^{-1}
\end{bmatrix},
\]
whence, from the form of $\mathbf{G}$, we obtain 
\[
\mathcal{R}_0^{\eqref{sys:ODE-multi-species}}=\rho( \mathbf{G}_{12}\mathbf{W}_{22}^{-1}\mathbf{W}_{21}\mathbf{W}_{11}^{-1}).
\]

The matrices \eqref{eq:matrices_W} are diagonal and the structure of $\mathbf{G}_{12}$ is relatively simple; it is therefore possible to further simplify the expression above. We obtain
\begin{equation}\label{eq:R0-multi-scpeies}
    \mathcal{R}_0^{\eqref{sys:ODE-multi-species}}=\max_{v=1,\ldots,V}\mathcal{R}_0^v,
\end{equation}
where
\[
\mathcal{R}_0^v=\rho(\mathcal{B}_v),
\]
is the basic reproduction number of virus $v=1,\ldots, V$ and $\mathcal{B}_v$ is a $P\times P$-matrix defined as
\begin{equation}\label{eq:matrix_Bv}
\mathcal{B}_v=[\mathcal{B}_{pqv}]_{pq},
\end{equation}
where  \(\mathcal{B}_{pqv}\) denotes the element in the \(p\)-th row and \(q\)-th column of the matrix \(\mathcal{B}_v\) and is given by
\[
\mathcal{B}_{pqv}=\frac{\beta_{pqv} \varepsilon_{qv} S_p^0}{(\varepsilon_{qv} +d_{q})(\gamma_{qv}+d_{q})},
\]
for $p,q=1,\ldots,P,\; v=1,\ldots,V$.

From \cite[Theorem 2]{VdDWatmough2002}, we deduce the following result concerning the local asymptotic stability of the disease-free equilibrium $\bE_0^{\eqref{sys:ODE-multi-species}}$.
\begin{lemma}\label{lem:las-multi-species}
    The disease-free equilibrium $\bE_0^{\eqref{sys:ODE-multi-species}}$ of \eqref{sys:ODE-multi-species} is locally asymptotically stable if $\mathcal{R}_0^{\eqref{sys:ODE-multi-species}} < 1$ and unstable if $\mathcal{R}_0^{\eqref{sys:ODE-multi-species}} > 1$.
\end{lemma}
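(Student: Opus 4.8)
The plan is to apply Theorem~2 of \cite{VdDWatmough2002}, which states that once the infected subsystem admits a valid decomposition into ``new infections'' and ``remaining transitions'' satisfying the hypotheses (A1)--(A5) of that paper, the DFE is locally asymptotically stable precisely when $\rho(\mathbf{G}\mathbf{W}^{-1})<1$ and unstable when $\rho(\mathbf{G}\mathbf{W}^{-1})>1$. Since the matrices $\mathbf{G}$ and $\mathbf{W}$ in \eqref{eq:matrix_F} and \eqref{eq:matrix_W} have already been extracted as the linearizations at $\bE_0^{\eqref{sys:ODE-multi-species}}$ of the new-infection and transition vectors, the remaining work is simply to confirm that this decomposition is admissible.

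First I would fix the ordering $\dbracket{L_{pv}},\dbracket{I_{pv}}$ of the $2PV$ infected compartments and write the vector of new infections $\mathcal{F}$ as the one whose $L_{pv}$-entry is $\sum_q\beta_{pqv}I_{qv}S_p$ and whose $I_{pv}$-entries vanish, with the transition vector $\mathcal{V}$ collecting all remaining flows (incubation, recovery, and natural death). Hypotheses (A1)--(A4) are then immediate from the nonnegativity of the parameters $\beta_{pqv},\varepsilon_{pv},\gamma_{pv},d_p$ together with the structure of the model: new infections enter only the latent compartments, each outflow term vanishes when the corresponding compartment is empty, and both $\mathcal{F}$ and the inflow part of $\mathcal{V}$ vanish on the disease-free set.

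The step requiring genuine verification is (A5): when $\mathcal{F}\equiv 0$, the dynamics restricted to the disease-free manifold must be locally asymptotically stable at $\bE_0^{\eqref{sys:ODE-multi-species}}$. Here this is transparent, because with no infection present equations \eqref{sys:multi-species_dS} and \eqref{sys:multi-species_dR} decouple into $\dot S_p=b_p-d_p S_p$ and $\dot R_p=-d_p R_p$, whose Jacobian is diagonal with entries $-d_p<0$; hence the disease-free subsystem is linearly stable. I would also record that $\mathbf{W}$ is a non-singular M-matrix: it is block lower-triangular with diagonal blocks $\mathbf{W}_{11}=\diag(\dbracket{\varepsilon_{pv}+d_p})$ and $\mathbf{W}_{22}=\diag(\dbracket{\gamma_{pv}+d_p})$ having strictly positive entries, so $\mathbf{W}^{-1}\geq 0$ and the splitting of the infected Jacobian has the required sign type.

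With (A1)--(A5) in place, Theorem~2 of \cite{VdDWatmough2002} applies verbatim, yielding local asymptotic stability of $\bE_0^{\eqref{sys:ODE-multi-species}}$ when $\mathcal{R}_0^{\eqref{sys:ODE-multi-species}}=\rho(\mathbf{G}\mathbf{W}^{-1})<1$ and instability when $\mathcal{R}_0^{\eqref{sys:ODE-multi-species}}>1$. I expect no serious obstacle here: the only place demanding attention is the bookkeeping in (A1)--(A4) for the doubly-indexed compartments, which the $\dbracket{\cdot}$ notation introduced earlier is precisely designed to streamline.
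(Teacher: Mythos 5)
Your proposal is correct and follows exactly the route the paper takes: the paper also deduces the lemma directly from Theorem~2 of \cite{VdDWatmough2002} using the matrices $\mathbf{G}$ and $\mathbf{W}$ of Section~\ref{sec:basic_analysis_ODE}, simply leaving the routine verification of hypotheses (A1)--(A5) implicit. Your explicit checks of those hypotheses (in particular (A5) via the decoupled $\dot S_p=b_p-d_pS_p$, $\dot R_p=-d_pR_p$ subsystem and the M-matrix property of $\mathbf{W}$) are exactly the bookkeeping the paper omits, so there is no substantive difference.
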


Remark that in the absence of interaction between the populations, i.e., when $\beta_{p\ell v}=0$ if $p\neq\ell$, the basic reproduction number in each population $p=1,\ldots,P$ is given by
\begin{equation}\label{eq:single_R0p}
   \mathcal{R}_{0p}=\max_{v=1,\ldots,V}(\mathcal{R}_{0p}^v),
   \text{ where }
   \mathcal{R}_{0p}^v=\frac{\beta_{ppv} \varepsilon_{pv} S_p^0}{(\varepsilon_{pv}+d_p)(\gamma_{pv}+d_p)}.
\end{equation}
Various forms of the reproduction numbers appear in the remainder of the text. To clarify, we list them here.
\begin{itemize}
\item $\mathcal{R}_{0p}$ denotes the basic reproduction number of species $p$ in the presence of multiple pathogens $V$, excluding other species.
\item $\mathcal{R}_{0p}^v$ denotes the basic reproduction number of species $p$ in the presence of a single pathogen $v$, excluding other species.
\item $\mathcal{R}_{0}^v$ denotes the basic reproduction number of a single pathogen $v$ across $P$ interacting species.
\item $\mathcal{R}_{0}$ denotes the basic reproduction number of multiple pathogens $V$ across multiple interacting species $P$, i.e, for the full System \eqref{sys:ODE-multi-species}.
\end{itemize}

Asymptotic stability is in fact global for a given pathogen when $\R_{0}<1$, as established in the following theorem.
\begin{theorem}\label{th:GAS_R0}
If $\mathcal{R}_{0}^{\eqref{sys:ODE-multi-species}} < 1$, 
then the DFE $\bE_{0}^{\eqref{sys:ODE-multi-species}}$  is globally asymptotically stable (GAS) in $\Omega$, where 
\begin{multline*}
    \Omega = \Biggl\{
    (\llbracket S_p\rrbracket,\llbracket L_{pv}\rrbracket,\llbracket I_{pv}\rrbracket,\llbracket R_p\rrbracket) \in \mathbb{R}^{2P(V+1)} : \\
    N_p=S_P+\sum_{v=1}^{V}(L_{pv}+I_{pv})+R_p \leq \frac{b_p}{d_p}; \quad p=1,\ldots, P
    \Biggr\}.
\end{multline*}
\end{theorem}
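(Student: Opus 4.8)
The plan is to combine a positive-invariance argument for $\Omega$, a linear comparison system built from the next-generation matrices $\mathbf{G}$ and $\mathbf{W}$ already computed at $\bE_0^{\eqref{sys:ODE-multi-species}}$, and a final limiting argument for the uninfected compartments. First I would establish that $\Omega$ is forward invariant and, crucially, that $S_p(t)\le S_p^0$ for all $t\ge 0$. Summing \eqref{sys:multi-species_dS}--\eqref{sys:multi-species_dR} over the infected and recovered compartments, the infection, incubation, and recovery terms telescope and one obtains $\dot N_p = b_p - d_p N_p$ for each $p$. Hence $N_p(t)=S_p^0+(N_p(0)-S_p^0)e^{-d_p t}$, so $N_p(0)\le b_p/d_p$ forces $N_p(t)\le b_p/d_p = S_p^0$ for all $t\ge 0$; in particular $\Omega$ is positively invariant and $S_p(t)\le N_p(t)\le S_p^0$ throughout.

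Next, collect the infected variables into $\bx=(\dbracket{L_{pv}},\dbracket{I_{pv}})^\top$. Substituting the bound $S_p\le S_p^0$ into \eqref{sys:multi-species_dL} (the $I$-equations carry no $S$-dependence) shows that the infected subsystem satisfies the componentwise differential inequality $\dot\bx \le (\mathbf{G}-\mathbf{W})\bx$, where $\mathbf{G}$ and $\mathbf{W}$ are exactly the matrices \eqref{eq:matrix_F} and \eqref{eq:matrix_W}, the discarded term $\sum_q\beta_{pqv}I_{qv}(S_p^0-S_p)$ being nonnegative on $\Omega$. The matrix $\mathbf{G}-\mathbf{W}$ is Metzler, its off-diagonal entries arising from the nonnegative blocks $\mathbf{G}_{12}$ and $\mathbf{W}_{21}$, so the linear system $\dot\bz=(\mathbf{G}-\mathbf{W})\bz$ is cooperative and $e^{(\mathbf{G}-\mathbf{W})t}\ge \b0$. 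By the M-matrix equivalence underlying \cite[Theorem 2]{VdDWatmough2002}, the condition $\mathcal{R}_0^{\eqref{sys:ODE-multi-species}}=\rho(\mathbf{G}\mathbf{W}^{-1})<1$ is equivalent to the spectral abscissa of $\mathbf{G}-\mathbf{W}$ being negative. Solving the comparison system with $\bz(0)=\bx(0)$ and invoking the comparison principle for cooperative systems then gives $\b0\le\bx(t)\le e^{(\mathbf{G}-\mathbf{W})t}\bx(0)\to\b0$, so every $L_{pv}(t)$ and $I_{pv}(t)$ tends to $0$.

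Finally I would recover convergence of the uninfected compartments. Since all $I_{qv}(t)\to 0$ and $S_p$ is bounded, the $S_p$-equation is asymptotically autonomous with limiting equation $\dot S_p=b_p-d_p S_p$, whose equilibrium $S_p^0$ is globally attracting; likewise $\dot R_p=\sum_v\gamma_{pv}I_{pv}-d_p R_p$ has limiting equation $\dot R_p=-d_p R_p$. By the theory of asymptotically autonomous systems, $S_p(t)\to S_p^0$ and $R_p(t)\to 0$, so every trajectory starting in $\Omega$ converges to $\bE_0^{\eqref{sys:ODE-multi-species}}$. Combined with the local asymptotic stability supplied by Lemma~\ref{lem:las-multi-species}, attractivity from all of $\Omega$ upgrades to global asymptotic stability in $\Omega$.

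I expect the main obstacle to be the comparison step rather than the bookkeeping: one must verify both that the discarded terms are genuinely nonnegative on all of $\Omega$ (so that $\dot\bx\le(\mathbf{G}-\mathbf{W})\bx$ holds globally, not merely near the DFE) and that the comparison principle applies, which hinges on the Metzler structure of $\mathbf{G}-\mathbf{W}$. A subtlety worth flagging is that, because distinct pathogens do not cross-infect, $\mathbf{G}\mathbf{W}^{-1}$ is reducible and its left Perron vector need not be strictly positive; this is why I favour the comparison-system route over a single global Lyapunov function $\bx\mapsto\mathbf{w}^\top\mathbf{W}^{-1}\bx$, which would only control the pathogen block(s) attaining $\max_{v}\mathcal{R}_0^v$ and would require an additional LaSalle argument to dispose of the remaining blocks.
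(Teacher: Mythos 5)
Your proof is correct, and it takes a genuinely different route from the paper's. The paper (Appendix~\ref{app:proof_GAS_R0}) verifies the five hypotheses of the Kamgang--Sallet theorem \cite[Theorems 4.3 and 4.5]{KAMGANG20081} for the compact form $\dot\bx_S=A_{11}(\bx_S-\bx_S^0)+A_{12}\bx_I$, $\dot\bx_I=A_{22}(\bx_S,\bx_I)\bx_I$; since $A_{22}$ in \eqref{eq:matrix-A22-KSmethod} is reducible, it must first pass to the normal form \eqref{eq:A22-block-diag-form}, splitting the infected dynamics into $PV$ irreducible $2\times 2$ blocks, one per pair $(p,v)$, and apply the theorem blockwise. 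You instead run a direct comparison argument. The same two key ingredients appear in both proofs: the bound $S_p\le S_p^0$ on $\Omega$, which is the substance of the paper's condition \textbf{C4}, and the equivalence $\R_0<1\Leftrightarrow\sigma(\mathbf{G}-\bW)<0$ from the proof of \cite[Theorem 2]{VdDWatmough2002}, which is the paper's condition \textbf{C5}. But you conclude via the Kamke comparison principle for the cooperative linear majorant $\dot\bz=(\mathbf{G}-\bW)\bz$ rather than via a packaged GAS theorem, and this buys you something real: the comparison principle needs only the Metzler property, not irreducibility, so the reducibility that forces the paper through its normal-form detour is simply irrelevant to your argument --- your closing remark about why a single Lyapunov function $\mathbf{w}^\top\bW^{-1}\bx$ would fail identifies precisely the obstruction that the paper's Appendix~\ref{app:normal-form-matrices} exists to circumvent. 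What the paper's route buys in exchange is that Kamgang--Sallet handles convergence of the uninfected block automatically, whereas you need a separate limiting argument; note, though, that your asymptotically autonomous step can be made elementary, since $\dot N_p=b_p-d_pN_p$ gives $N_p\to S_p^0$ directly, $R_p\to 0$ follows by variation of constants once $I_{pv}\to 0$, and then $S_p=N_p-\sum_v(L_{pv}+I_{pv})-R_p\to S_p^0$ without appealing to Markus--Thieme theory. One point to state explicitly in a final write-up: nonnegativity of solutions, which underlies $S_p\le N_p$, the discarding of $\sum_q\beta_{pqv}I_{qv}(S_p^0-S_p)\ge 0$, and the comparison starting from $\bx\ge\b0$; it is routine for this vector field, but it is part of what the paper's condition \textbf{C1} certifies.
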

This result is proved in Appendix~\ref{app:proof_GAS_R0}.

\subsection{The continuous time Markov chain model}
The ODE model~\eqref{sys:ODE-multi-species} is the limit of a continuous-time Markov chain (CTMC) \cite{Kurtz1970}, which we now consider.
Indeed, CTMCs provide allow to explore various scenarios that the deterministic models cannot capture.
In particular, CTMCs of the type used here have discrete state variables, thereby allowing transitions toward a state where the disease is eradicated.
In ODE models, such states are typically approached only as limits, which leads to implausible situations \cite{fowler2021atto}.

Let $T$ denote the set of all finite-dimensional vectors whose components are nonnegative integers. 
The CTMC model related to the deterministic model~\eqref{sys:ODE-multi-species} then takes the form,
\begin{subequations}\label{sys:CTMC_general}
    \begin{equation}\label{sys:CTMC_general_Xt}
        \mathbf{X}_t=
        \left(
\llbracket S_p(t)\rrbracket,\llbracket L_{pv}(t)\rrbracket,\llbracket I_{pv}(t)\rrbracket,\llbracket R_p(t)\rrbracket
\right),\quad t\in\mathbb{R}_+,
\end{equation}
where each element of the vector is a collection of discrete random variables that take values in $T$ and where the time between events is exponentially distributed \cite{allen2010introduction}.
The process is time-homogeneous as the rates in the ODE are constants, so the CTMC is characterised by transition probabilities from state 
$\mathbf{k}$ to state $\mathbf{j}$,
\begin{equation}\label{sys:CTMC_general_proba}
\mathbb{P}(\mathbf{X}(t+\Delta t)=\mathbf{j} \mid \mathbf{X}(t)=\mathbf{k})=\sigma(\mathbf{k}, \mathbf{j}) ,    
\end{equation}
\end{subequations}
with transition rates $\sigma(\mathbf{k}, \mathbf{j})$ given in Table~\ref{tab:prob-multi-species}.

\begin{table}[htbp]
\centering
\begin{tabular}{lll}
\toprule
Event & Transition  & Transition rate \\
\midrule
Birth of $S_p$ & $S_p\rightarrow S_p+1$ & $b_p$ \\
Natural death of $S_p$ & $S_p\rightarrow S_p-1$ & $d_p S_p$ \\
Natural death of $L_{pv}$ & $L_{pv}\rightarrow L_{pv}-1$ & $d_p L_{pv}$ \\
Natural death of $I_{pv}$ & $I_{pv}\rightarrow I_{pv}-1$ & $d_p I_{pv}$ \\
Natural death of $R_p$ & $R_p\rightarrow R_p-1$ & $d_p R_p$ \\
Infection of $S_p$ by $I_{qv}$ & $S_{p} \rightarrow S_{p}-1, L_{pv} \rightarrow L_{pv}+1$ & $\beta_{pqv} I_{qv} S_{p}$ \\
End of incubation of $L_{pv}$  & $L_{pv} \rightarrow L_{pv}-1, I_{pv} \rightarrow I_{pv}+1$ & $\varepsilon_{pv}L_{pv}$ \\
Recovery of $I_{pv}$ & $ I_{pv} \rightarrow I_{pv}+1,  R_{p} \rightarrow R_{p}+1$ & $\gamma_{pv} I_{pv}$ \\
\bottomrule
\end{tabular}  
\caption{Events, transitions $\mathbf{k}\to\mathbf{j}$ and transition rates $\sigma(\mathbf{k}, \mathbf{j})$ of the general CTMC model \eqref{sys:CTMC_general}.}
\label{tab:prob-multi-species}
\end{table}

\subsection{Branching process approximation of the CTMC}

Branching process approximations (BPA) simplify the analysis of CTMC by approximating them with simpler branching processes. By considering only the number of offspring generated by each state transition, BPA reduce the complexity associated with modelling and analysing CTMC. 
We use multitype branching process approximation (MBPA) to approximate the CTMC \eqref{sys:CTMC_general} near the disease-free equilibrium, where the term \emph{multitype} is used to indicate that there are different types of offspring.

The reasoning below follows \cite{ALLEN201399}; see also \cite{GreenwoodGordillo2009,lahodny2013probability}.
Let $Z=(Z^{\ell}, Z^i)$ be a $2PV$-multitype Galton-Watson process that approximates the CTMC $\mathbf{X}_t$, with superscripts \(\ell\) and \(i\) used for latent and infectious states, respectively. 
This MTBA is the homogeneous vector Markov process whose states are vectors in $T$ defined above.

Call \emph{infectious type} $(k,p,v)$ infected individuals of species $s=1,\ldots,P$ who are latent ($k=\ell$) or infectious ($k=i$) with virus $v=1,\ldots,V$.
Using the definition of multitype branching processes given in \cite{athreya1972branching, Haccou_Jagers_Vatutin_2005, Harris1963,  kurtz1972relationships}, the 2PV-multitype Galton-Watson process \(Z(t)=(Z^\ell,Z^i)\) is the vector of discrete random variables $Z_{pv}^k(t),\; k\in\{\ell,i\},\; p=1,\ldots,P,\; v=1,\ldots,V$ for the number of individuals in infectious type \((k,p,v)\) at time \( t \in [0, \infty) \), \( Z_{pv}(t) \in T \).
Let 
\[
F(t, \bu) = \left(\llbracket F_{pv}^\ell(\bu)\rrbracket,\llbracket F_{pv}^i(\bu)\rrbracket\right)
\]
denote the probability generating function (p.g.f.) for the entire process. 

The p.g.f. for type \((k,p,v)\) at time \( t \) is defined as
\[
F_{pv}^k(t, \bu) = \mathbb{E}\left(\bu^{Z(t)} | Z(0) = e_i\right),
\]
where \( e_i \) is the \( i \)-th unit vector, \(  \bu=(\llbracket u^\ell_{pv} \rrbracket, \llbracket u^i_{pv} \rrbracket) \), \( u^k_{pv} \in [0, 1]\) and we denote \( u^{Z(t)} = (u_{11}^\ell)^{Z^L_1(t)} \dots (u_{PV}^i)^{Z^I_{PV}(t)} \). 
The number of offspring produced by an individual in infectious type \((k,p,v)\) is independent of the number of individuals in that infectious type or in infectious type \((k',p',v')\neq(k,p,v)\).
Further, the p.g.f. \(F_{pv}^k(t,\bu)\) is a solution of the backward Kolmogorov differential equation,
\[
\frac{\partial}{\partial t} F_{pv}^k(t,\bu) = \omega_{pv}^k [ f_{pv}^k(F(t,\bu)) - F_{pv}^k(t,u)], \quad p = 1, \dots, ,\; v = 1, \dots, V,
\]
with initial conditions \( F_{pv}^k(0,\bu) = u_{pv}^k \),where \( \omega_{pv}^k \) is the rate parameter for the exponentially distributed lifetime of infectious type \((k,p,v)\) and $f_{pv}^k$ is the offspring p.g.f. for type $(k,p,v)$ given that $Z(0)=e_i$ \cite{athreya1972branching, Haccou_Jagers_Vatutin_2005, Harris1963,  kurtz1972relationships}.

More specifically, the offspring p.g.f. $f_{pv}^k$ for type \((k,p,v)\) is defined as \( f_{pv}^k: [0,1]^n \to [0,1] \),
 \begin{equation}\label{sys:pgf-multi-species-F}
    f(\bu)= \left(\llbracket f_{pv}^\ell(\bu)\rrbracket,\llbracket f_{pv}^i(\bu)\rrbracket\right),
\end{equation}
where for $k\in\{\ell,i\}$,
\begin{equation}\label{gen_form_pgf}
f^k_{pv}(\bu)
=
\sum_{\llbracket r^\ell_{pv}\rrbracket,\llbracket r^i_{pv}\rrbracket=0}^{\infty}
\mathbb{P}(\llbracket r_{pv}^k \rrbracket)\ 
(u_{11}^\ell)^{r^\ell_{11}}\cdots \ (u_{PV}^\ell)^{r^\ell_{PV}} \ (u_{11}^i)^{r^i_{11}}\ \cdots \ (u_{PV}^i)^{r^i_{PV}},
\end{equation}
with $\mathbb{P}(\llbracket r_{pv}^k \rrbracket)=\mathbb{P}(\llbracket r^\ell_{pv} \rrbracket, \llbracket r^i_{pv} \rrbracket)$ the probability that an individual of type \((k,p,v)\) gives birth to $r^k_{j}$ individuals of type \(u_j^k\), for $j \in \{\llbracket{pv} \rrbracket\}$ and $k\in\{\ell,i\}$. 
These probabilities are computed using the infinitesimal transition probabilities of the CTMC \eqref{sys:CTMC_general} in Table~\ref{tab:prob-multi-species} with the approximation $S_p=S_p^0$.  

Our focus is on the process near the disease-free equilibrium.
In this context, since the initial number of infectious individuals is small, the assumption of independence is reasonable. 
Since the process is time-homogeneous, individuals in infectious type \((k,p,v)\) have the same offspring p.g.f. at all times.

Using the general form in \eqref{gen_form_pgf}, the offspring p.g.f. are given by
\begin{subequations}\label{sys:pgf-multi-species}
\begin{align}
   & f_{pv}^\ell(\bu)= \frac{ \varepsilon_{pv} u_{pv}^i+d_p}{\varepsilon_{pv}+d_p}, \label{sys:pgf-multi-species-flpv}\\
   & f_{pv}^i(\bu)=\frac{\left(\displaystyle\sum_{q=1}^{P} \beta_{qpv}S_{q}^0 u_{qv}^\ell\right)u_{pv}^i+\gamma_{pv}+d_p}{\Lambda_{p v}},
   \label{sys:pgf-multi-species-fipv}
\end{align}
\end{subequations}
with
\begin{equation*}
\Lambda_{pv}=\sum_{q=1}^{P}\beta_{pqv}S_{q}^0+\gamma_{pv}+d_p.
\end{equation*}
The following result then holds.
\begin{theorem}\label{th:exists-q-multi-pv}
    The probability of extinction in the multitype branching process with probability generating functions~\eqref{sys:pgf-multi-species} is given by
    \begin{subequations}
    \label{eq:prob-ext-gen2}
    \begin{align}
        \mathbb{P}_{\text{ext}} &= \prod_{p=1}^{P}\prod_{v=1}^{V}\left(\frac{ \varepsilon_{pv} z_{pv}^i+d_p}{\varepsilon_{pv}+d_p}\right)^{\ell_{pv0}}(z_{pv}^i)^{i_{pv0}} ,\\
        \mathbb{P}_{\text{outbreak}} &= 1- \mathbb{P}_{\text{ext}},
    \end{align}    
    \end{subequations}
    where 
    \begin{equation*}\label{eq:def-z}
        \bm{z}:=\left(\llbracket z_{pv}^\ell\rrbracket, \llbracket z_{pv}^i\rrbracket\right)
    \end{equation*}
    is a fixed point on $ \displaystyle[0,1]^{2PV}$ of the p.g.f.~\eqref{sys:pgf-multi-species} and $\llbracket L_{pv}(0)\rrbracket=\llbracket\ell_{pv0}\rrbracket$, $\llbracket I_{pv}(0) \rrbracket=\llbracket i_{pv0}\rrbracket$ is the initial condition. The following alternative holds:
    \begin{itemize}
        \item if $\mathcal{R}_0^{\eqref{sys:ODE-multi-species}} \leq 1$, then $\bm{z} = \mathbf{1}$, i.e., $\mathbb{P}_{\text{ext}}=1$;
        \item if $\mathcal{R}_0^{\eqref{sys:ODE-multi-species}}> 1$, then additionally to $\bz=\bm{1}$, there is a unique vector $\mathbf{0}<\bm{z}< \mathbf{1}$ such that $\bF(\bz)=\bz$.
    \end{itemize}
\end{theorem}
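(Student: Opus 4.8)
The plan is to reduce the statement to the classical extinction theory of multitype Galton--Watson (MGW) processes, following \cite{ALLEN201399,athreya1972branching,Harris1963}. The backbone is that the extinction probability vector $\bz=(\llbracket z_{pv}^\ell\rrbracket,\llbracket z_{pv}^i\rrbracket)$ is the minimal nonnegative fixed point of the offspring p.g.f.\ $f=\bF$ of \eqref{sys:pgf-multi-species}, i.e.\ the smallest solution of $\bF(\bz)=\bz$ in $[0,1]^{2PV}$, obtained as the limit of the monotone iteration driven by the backward Kolmogorov equation recorded in the excerpt. First I would derive the extinction formula. Since lineages issued from distinct initial individuals are independent, the probability that every lineage dies out, starting from $\ell_{pv0}$ latent and $i_{pv0}$ infectious individuals of each type, is $\prod_{p,v}(z_{pv}^\ell)^{\ell_{pv0}}(z_{pv}^i)^{i_{pv0}}$; substituting the fixed-point identity $z_{pv}^\ell=f_{pv}^\ell(\bz)=(\varepsilon_{pv}z_{pv}^i+d_p)/(\varepsilon_{pv}+d_p)$ from \eqref{sys:pgf-multi-species-flpv} turns this into exactly \eqref{eq:prob-ext-gen2}. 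All the remaining content therefore concerns the existence, location and uniqueness of $\bz$.

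Next I would compute the offspring mean matrix $\M=\left[\tfrac{\partial f_j}{\partial u_k}(\mathbf{1})\right]_{jk}$ and exploit its structure. Because coinfection is excluded and the pathogen label is preserved along every transmission and progression event, each $f_{pv}^k$ depends only on the coordinates carrying the same virus index $v$; hence $\M=\diag(\M^{(1)},\dots,\M^{(V)})$ is block diagonal and the process factorises into $V$ independent MGW subprocesses, one per virus, so that $\rho(\M)=\max_v\rho(\M^{(v)})$ and $\bz$ factorises accordingly, matching $\mathcal{R}_0=\max_v\mathcal{R}_0^v$ of \eqref{eq:R0-multi-scpeies}. At this stage the lower bound in ``$\mathbf{0}<\bz$'' is free: from \eqref{sys:pgf-multi-species} every fixed point obeys $z_{pv}^\ell\ge d_p/(\varepsilon_{pv}+d_p)>0$ and $z_{pv}^i\ge(\gamma_{pv}+d_p)/\Lambda_{pv}>0$, since a latent (resp.\ infectious) individual may die (resp.\ recover or die) before producing any offspring, so no coordinate of a fixed point can vanish.

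The crux is the threshold equivalence $\rho(\M^{(v)})\gtrless 1\iff\mathcal{R}_0^v\gtrless 1$, delicate because $\M^{(v)}$ is $2P\times 2P$ whereas $\mathcal{R}_0^v=\rho(\mathcal{B}_v)$ comes from the $P\times P$ matrix $\mathcal{B}_v$ of \eqref{eq:matrix_Bv}. I would establish it via the infected subsystem linearised at the DFE. Let $\Omega$ be the diagonal matrix of event rates $\omega_{pv}^k$ (namely $\varepsilon_{pv}+d_p$ for latent and $\Lambda_{pv}$ for infectious types). A direct check against Table~\ref{tab:prob-multi-species} shows that $\mathbf{G}-\mathbf{W}$ is the transpose of $\Omega(\M-I)$, so these matrices share their spectrum and $s(\mathbf{G}-\mathbf{W})=s(\Omega(\M-I))$, writing $s(\cdot)$ for the spectral abscissa. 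Since $\M\ge 0$, the matrix $\M-I$ is Metzler with $s(\M-I)=\rho(\M)-1$; and because $\Omega$ is positive diagonal, the cone (M-matrix) characterisation of stability gives $\operatorname{sign}s(\Omega(\M-I))=\operatorname{sign}s(\M-I)$. Chaining these with \cite[Theorem~2]{VdDWatmough2002}, which yields $\operatorname{sign}s(\mathbf{G}-\mathbf{W})=\operatorname{sign}(\mathcal{R}_0-1)$, gives $\operatorname{sign}(\rho(\M)-1)=\operatorname{sign}(\mathcal{R}_0-1)$, block-by-block in $v$.

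Finally I would invoke the standard MGW dichotomy on each virus block: when $[\beta_{\cdot\cdot v}]$ renders $\M^{(v)}$ positive regular and the subprocess nonsingular, $\bz^{(v)}=\mathbf{1}$ if $\rho(\M^{(v)})\le 1$ and $\bz^{(v)}$ is the unique fixed point strictly below $\mathbf{1}$ if $\rho(\M^{(v)})>1$ \cite{athreya1972branching,Harris1963}; reassembling over $v$ and using the threshold equivalence produces the claimed alternative. I expect the principal obstacles to be exactly (i) pinning down the identity between $\mathbf{G}-\mathbf{W}$ and $\Omega(\M-I)$ together with the Metzler/M-matrix bookkeeping that transports the sign through the diagonal scaling, and (ii) verifying positive regularity, which is where irreducibility of the within-virus contact matrix enters. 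The across-virus reducibility causes no difficulty owing to the independent factorisation, though it does mean that, when $\mathcal{R}_0>1$, the coordinates belonging to subcritical viruses remain at $1$, so ``$\bz<\mathbf{1}$'' is to be read on the supercritical blocks that drive the outbreak.
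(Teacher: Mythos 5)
Your proposal is correct, and it follows the paper's overall skeleton (extinction probability as the minimal fixed point of the offspring p.g.f., reduction of the generic product formula to \eqref{eq:prob-ext-gen2} via $z_{pv}^\ell=f_{pv}^\ell(\bz)$, factorisation through reducibility, then the classical dichotomy on each irreducible block), but it handles the two key technical steps by a genuinely different route. The paper (Appendices~\ref{app:normal-form-matrices} and \ref{app:proof-exists-q-multi-pv}) permutes $D\bF$ into a claimed direct sum of $PV$ blocks of size $2\times 2$, one per pair $(p,v)$, verifies monotonicity, non-singularity and block-primitivity, and then simply cites the Threshold Theorem of \cite{ALLEN201399} together with \cite[Theorem 7.1]{Harris1963}; you instead (a) decompose only across viruses, obtaining $V$ independent subprocesses with $2P\times 2P$ mean matrices $\mathbb{M}^{(v)}$, and (b) reprove the threshold equivalence $\operatorname{sign}\left(\rho(\mathbb{M})-1\right)=\operatorname{sign}\left(\mathcal{R}_0-1\right)$ from scratch via the identity $(\mathbf{G}-\mathbf{W})^T=\Omega(\mathbb{M}-I)$, Metzler/M-matrix sign preservation under positive diagonal scaling, and \cite[Theorem 2]{VdDWatmough2002} --- this identity is exactly the engine inside the cited Threshold Theorem, so you pay a page of linear algebra the paper outsources and gain a self-contained argument. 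Your coarser decomposition (a) is in fact the safer of the two: since $\partial f_{pv}^i/\partial u_{qv}^\ell=\beta_{qpv}S_q^0u_{pv}^i/\Lambda_{pv}\neq 0$ for $q\neq p$, populations remain coupled within a virus, so the strong components of $D\bF$ genuinely have size $2P$ rather than $2$ (note that the paper's displayed blocks in \eqref{eq:DF_block} still carry sums over $q$ and a dangling index $q$), and the per-virus blocks are precisely where the irreducibility hypothesis on $[\beta_{pqv}]_{p,q}$ enters, as you flag. Two further points in your favour: the elementary bounds $z_{pv}^\ell\geq d_p/(\varepsilon_{pv}+d_p)$ and $z_{pv}^i\geq(\gamma_{pv}+d_p)/\Lambda_{pv}$ yield $\bz>\b0$ directly rather than as a by-product of the blockwise theorem, and your closing caveat --- that when $\R_0>1$ but some $\R_0^v\leq 1$ the coordinates of the subcritical blocks stay at $1$, so ``$\bz<\mathbf{1}$'' must be read on the supercritical blocks --- is a real subtlety that the paper's concluding ``putting things together'' glosses over.
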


\begin{proof}
The probability of disease extinction or disease outbreak relative to the CTMC $\mathbf{X}_t$ can be approximated by the extinction probability of the MBPA with generating function~\eqref{sys:pgf-multi-species}.
For a small number of latent and infectious individuals and under the assumptions above about the branching process approximation, the probabilities of disease extinction and disease outbreak, given $\llbracket L_{pv}(0)\rrbracket=\llbracket\ell_{pv0}\rrbracket$, $\llbracket I_{pv}(0) \rrbracket=\llbracket i_{pv0}\rrbracket$ and $\mathcal{R}_0^{\eqref{sys:ODE-multi-species}}>1$, are
\begin{subequations}
    \label{eq:prob-ext-gen1}
    \begin{align}
        \mathbb{P}_{\text{ext}} &= \prod_{p=1}^{P}\prod_{v=1}^{V}(z_{pv}^\ell)^{\ell_{pv0}}(z_{pv}^i)^{i_{pv0}} \\
        \mathbb{P}_{\text{outbreak}} &= 1-\mathbb{P}_{\text{ext}},
    \end{align}
\end{subequations}
where $\bm{z}:=\left(\llbracket z_{pv}^\ell\rrbracket, \llbracket z_{pv}^i\rrbracket\right)$ is a fixed point of the p.g.f.~\eqref{sys:pgf-multi-species} on $ \displaystyle[0,1]^{2PV}$ such that $\mathbf{0}\leq \bm{z} < \mathbf{1}$, denoting $\mathbf{0}=\mathbf{0}_{2PV}$ and $\mathbf{1}$ the $2PV$ vector of all ones.
It is clear that \eqref{sys:pgf-multi-species-flpv} implies that $z_{pv}^\ell=(\varepsilon_{pv} z_{pv}^i+d_p)/(\varepsilon_{pv}+d_p)$, where $z_{pv}^i$ is the fixed point of \eqref{sys:pgf-multi-species-fipv}. 
As a consequence, the probabilities in \eqref{eq:prob-ext-gen1} defined for $\mathcal{R}_0^{\eqref{sys:ODE-multi-species}} > 1$ become those in \eqref{eq:prob-ext-gen2}.
Proof of the existence part is deferred to Appendix~\ref{app:proof-exists-q-multi-pv}.
\end{proof}
In MBPA, processes either reach zero or approach infinity. 
The probability of extinction is interpreted in our model as the probability of a minor epidemic, while an outbreak is the establishment of the pathogen.
However, once the number of infected individuals in the branching process reaches a certain level, it is no longer an accurate approximation of the epidemic, as the MBPA does not adequately represent dynamics far from the disease-free equilibrium.

\section{Case of one pathogen and two species}
\label{sec:2p1v}
To get better insight into the behaviour of the system in a tractable case, we consider the case with $P=2$ species and $V=1$ pathogen.
We first specialise the model and results of mathematical analysis to this special case (Section~\ref{subsec:2P1V-analysis}), then consider numerically four specific transmission scenarios, which we also summarise by indicating the type of propagation taking place between populations $P_1$ and $P_2$.
\begin{enumerate}
    \item Pathogen propagation within and between species (Section~\ref{subsec:IHN}, $P_1\leftrightarrow P_2$).
    \item The pathogen is transmitted to both species, but one species can only infect members of its species (Section~\ref{subsec:wild-to-farm}, $P_1\to P_2,P_2\not\to P_1$).
    \item Both species can acquire the pathogen, but one of the two species does not become a transmitter (Section~\ref{subsec:VHS}, $P_1\to P_2,P_2\not\to$). 
    \item The pathogen is established at an endemic level in one species and is absent from the other species (Section~\ref{subsec:intro-from-endemic}, $P_1^\star\to P_2$).
\end{enumerate}

\subsection{The models and their basic analysis}
\label{subsec:2P1V-analysis}

Because $V=1$, the second index of $\varepsilon$ and $\gamma$ and the third index of $\beta$ always equals 1. 
To simplify notation, in the remainder of Section~\ref{sec:2p1v}, we drop this superfluous index and write $\varepsilon_p$, $\gamma_p$ and $\beta_{pq}$, for $p,q=1,\ldots,P$.
\subsubsection{The ODE model when $P=2$ and $V=1$}

Setting $P=2$ and $V=1$ in \eqref{sys:ODE-multi-species} gives 
\begin{subequations} \label{sys:ode-2p1v}
\begin{align}
&\dot S_1=b_1-\beta_{11}  S_1I_1-\beta_{12} S_1  I_2-d_1S_1 \\
&\dot L_1=\beta_{11}  S_1 I_1+\beta_{12} S_1  I_2-(\varepsilon_{1}+d_1)L_1 \\
&\dot I_1=\varepsilon_{1} L_1-(\gamma_{1}+d_1) I_1 \\
&\dot  R_1=\gamma_{1} I_1-d_1 R_1 \\
&\dot  S_2=b_2-\beta_{21} S_2 I_1-\beta_{22} S_2 I_2-d_2S_2\\
&\dot  L_2=\beta_{21} S_2 I_1+\beta_{22} S_2 I_2-(\varepsilon_{2}+d_2) L_2 \\
&\dot I_2=\varepsilon_{2} L_2-(\gamma_{2} +d_2) I_2 \\
&\dot R_2=\gamma_{2} I_2-d_2 R_2,
\end{align}
\end{subequations}

The system is considered with variables ordered as $S_1, S_2, L_1, L_2, I_1, I_2, R_1, R_2$.
The analysis in Section~\ref{sec:basic_analysis_ODE} carries through, taking into account that since $V=1$, a few adaptations of the terms and matrices defined there are required.
The disease-free equilibrium (DFE) of \eqref{sys:ode-2p1v} is 
\begin{equation}\label{eq:DFE-2P1V}
\bE_0^{\eqref{sys:ode-2p1v}}=
\left(\dbracket{S_p^0},\bm{0}_{\mathbb{R}^{3P}}\right)=\left(\frac{b_1}{d_1},\frac{b_2}{d_2},\bm{0}_{\mathbb{R}^6}\right),
\end{equation}
infected variables are $\dbracket{L_{p}},\dbracket{I_{p}}$, $P\times P$ matrix $\mathbf{G}_{12}$ is not a block matrix but instead has entry $(p,q)$ equal to $S_p^0\beta_{pq}$ and, since $P=2$,
\[
   \mathbf{G}_{12}=\begin{pmatrix} 
   \beta_{11}  S_1^0 & \beta_{12}S_1^0 \\
   \beta_{21}  S_2^0 & \beta_{22} S_2^0 
   \end{pmatrix}.
\]
Blocks in the matrix $\mathbf{W}$ take the form
\[
\mathbf{W}_{11}=\diag(\llbracket\varepsilon_{p}+d_p\rrbracket), 
\mathbf{W}_{21}=\diag(\llbracket\varepsilon_{p}\rrbracket)
\text{ and }
\mathbf{W}_{22}=\diag(\llbracket\gamma_{p}+d_p\rrbracket),  
\]
so that in the case $P=2$ under consideration,
\[
   \mathbf{W}=\begin{pmatrix}
   \varepsilon_{1}+d_1 & 0 & 0 & 0\\
   0 & \varepsilon_{2}+d_2 & 0 & 0\\
   - \varepsilon_{1} & 0 & \gamma_{1}+d_1 & 0\\
   0 & - \varepsilon_{2} & 0 & \gamma_{2} +d_2
   \end{pmatrix}.
\]
It follows that the basic reproduction number of \eqref{sys:ode-2p1v} is 
\begin{equation}\label{eq:R0-2P1V}
\mathcal{R}_0^{\eqref{sys:ode-2p1v}}=\frac{\beta_{11} \kappa_{1} S_1^0 +\beta_{22} \kappa_{2} S_2^0+\sqrt{\left(\beta_{11} \kappa_{1} S_1^0 -\beta_{22} \kappa_{2} S_2^0\right)^2+4\beta_{12}\beta_{21}\kappa_{1}\kappa_{2}S_1^0S_2^0}}{2},
\end{equation}
where, for $p=1,2$,
\[
\kappa_{p}=\frac{ \varepsilon_{p} }{(\varepsilon_{p}+d_p)(\gamma_{p}+d_p)}.
\]
Results of Lemma~\ref{lem:las-multi-species} carry forward to the local asymptotic stability or instability of \eqref{eq:DFE-2P1V} based on the value of $\mathcal{R}_0^{\eqref{sys:ode-2p1v}}$ as defined by \eqref{eq:R0-2P1V}.

\subsubsection{The CTMC model when $P=2$ and $V=1$}
In this case of one virus and two species, the CTMC takes the form
\begin{equation}\label{sys:CTMC_2p1v}
    \bX_t=(S_1(t), S_2(t), L_1(t), L_2(t), I_1(t), I_2(t), R_1(t), R_2(t)),\quad t\in\IR_+
\end{equation}
and is characterised by the transition rates in Table~\ref{tab:prob-2p1v}.

\begin{table}[H]
\centering
\begin{tabular}{lll}
\toprule
Event ($p=1,2$) & Transition   & Transition rates \\
\midrule
Birth of $S_p$ & $S_p\rightarrow S_p+1$ & $b_p   $ \\
Natural death of $S_p$ & $S_p\rightarrow S_p-1$ & $d_p S_p  $ \\
Natural death of  $L_p$ & $L_p\rightarrow L_p-1$ & $d_p L_p  $ \\
Natural death of  $I_p$ & $I_p\rightarrow I_p-1$ & $d_p I_p  $ \\
Natural death of  $R_p$ & $R_p\rightarrow R_p-1$ & $d_p R_p  $ \\
Infection of $S_p$ by $I_q$ & $S_p\rightarrow S_p-1, L_p\rightarrow L_p+1$ & $\beta_{pq} S_p I_{q}$ \\
End of incubation of $L_p$ & $L_p\rightarrow L_p-1, I_p\rightarrow I_p+1$ & $\varepsilon_{p} L_p  $ \\
Recovery in  $I_p$ & $I_p\rightarrow I_p-1, R_p\rightarrow R_p+1$ & $\gamma_{p} I_p  $ \\
\bottomrule
\end{tabular}
\caption{Reaction rates used to determine transition probabilities for 2-species and 1-pathogen CTMC model.}
\label{tab:prob-2p1v}
\end{table}

\subsubsection{Branching process approximation}
Theorem~\ref{th:exists-q-multi-pv} is specialised to the $P=2$, $V=1$ case by letting $Z=(L_1, L_2, I_1, I_2)$ be a MBPA of the CTMC defined in \eqref{sys:CTMC_2p1v}, with infected types $\ell_{10}$, $\ell_{20}$, $i_{10}$ and $i_{20}$. 
The p.g.f. \eqref{sys:pgf-multi-species} takes here the form, for $\bu=(u_{1}^\ell,u_{2}^\ell,u_{1}^i,u_{2}^i)$,
\begin{equation}\label{eq:pgf-2p1v}
    \mathbf{F}(\bu)=(f_{1}^\ell(\bu),f_{2}^\ell(\bu),f_{1}^i(\bu),f_{2}^i(\bu)),
\end{equation}
where 
\begin{subequations}\label{sys:pgf-2p1v}
\begin{align}
    f_{1}^\ell(\bu) &= \frac{\varepsilon_{1} u_{1}^i+d_1}{\varepsilon_{1}+d_1} \label{syp:pgf-2p1v_fl1}\\ 
    f_{2}^\ell(\bu) &= \frac{\varepsilon_{2} u_{2}^i+d_2}{\varepsilon_{2}+d_2} \label{syp:pgf-2p1v_fl2}\\
    f_{1}^i(\bu) &= \frac{(\beta_{11} S_1^0 u_{1}^\ell+\beta_{21} S_2^0 u_{2}^\ell) u_{1}^i+\gamma_{1}+d_1}{\Lambda_{1}} \label{syp:pgf-2p1v_fi1}\\
    f_{2}^l(\bu) &= \frac{(\beta_{12} S_1^0 u_{1}^\ell +\beta_{22} S_2^0 u_{2}^\ell) u_{2}^i+\gamma_{2}+d_2}{\Lambda_{2}},  \label{syp:pgf-2p1v_fi2}
\end{align}
\end{subequations}
with 
\[
\Lambda_{p}=\sum_{q=1}^{P}\beta_{qp} S_{q}^0+\gamma_{p} +d_p,\quad p=1,2.
\]

Solving the equation $\bF(\bz)=\bz$ in the present case involves finding $(z_1,z_2,z_3,z_4)$ such that
\begin{subequations}\label{eq:fixed-pt-2p1v}
\begin{align}
& z_1=\frac{\varepsilon_{1} z_3+d_1}{\varepsilon_{1}+d_1}, \;  z_2=\frac{\varepsilon_{2} z_4+d_2}{\varepsilon_{2}+d_2}\\
    &(\beta_{11} S_1^0z_1 +\beta_{21} S_2^0 z_2)z_3 +\gamma_{1} +d_1=(\beta_{11} S_1^0 +\beta_{21} S_2^0 +\gamma_{1}+d_1)z_3\\
     &(\beta_{12} S_1^0 z_1 + \beta_{22} S_2^0 z_2)z_4 +\gamma_{2}+d_2=(\beta_{12} S_1^0 +\beta_{22} S_2^0 +\gamma_{2}+d_2)z_4.
\end{align}
\end{subequations}
This is easily done numerically in applications and is known  by Theorem~\ref{th:exists-q-multi-pv} to have a unique solution in $[\b0,\mathbf{1})$ when $\R_0^{\eqref{sys:ode-2p1v}}>1$.

\subsection{Infectious Hematopoietic Necrosis ($P_1\leftrightarrow P_2$)}
\label{subsec:IHN}
Initially observed at fish hatcheries in Oregon and Washington in the 1950s \cite{rucker1953contagious}, Infectious Hematopoietic Necrosis (IHN) is a viral disease that affects various species of salmonids, including Sockeye Salmon (\emph{Oncorhynchus nerka}) and Chum Salmon (\emph{Oncorhynchus keta}). 
The causative agent of IHN is the Infectious Hematopoietic Necrosis virus (IHNV), which belongs to the \emph{Rhabdoviridae} family.
This virus primarily targets the hematopoietic tissues, leading to severe anaemia and necrosis. 
The incubation period ranges from 5 to 45 days \cite{spickler2007infectious}.
Clinical signs of IHN include lethargy, darkening of skin colour, haemorrhages in various organs and eventual death. 
Infected fish may display reduced swimming ability and impaired feeding behaviour due to anaemia caused by red blood cell destruction \cite{yong2019infectious}. 
IHN can have a significant economic impact on fish farms that raise young rainbow trout or salmon, with mortality rates reaching 90-95\% in highly susceptible fish species \cite{dixon2016epidemiological,spickler2007infectious}.
The virus can be transmitted horizontally through direct contact or vertically from infected parents to their offspring. Waterborne transmission is also possible, particularly in crowded aquaculture settings. 

Assume Chum Salmon is species 1 and Sockeye Salmon is species 2. 
IHN can spread within and between these two species.
From \eqref{eq:fixed-pt-2p1v}, the fixed point is solution to
\begin{subequations}\label{eq:fixed-pt-2p1v_general}
\begin{align}
& z_1=\frac{\varepsilon_{1} z_3+d_1}{\varepsilon_{1}+d_1}, \,  z_2=\frac{\varepsilon_{2} z_4+d_2}{\varepsilon_{2}+d_2}\\
    &\mathcal{R}_{01}z_3^2-\left(\mathcal{R}_{01}+\frac{ \beta_{11}S_2^0}{\gamma_{1}+d_1}(1-z_4)+1\right)z_3+ 1=0\\
     &\mathcal{R}_{02}z_4^2-\left(\mathcal{R}_{02}+\frac{ \beta_{22}S_1^0}{\gamma_{2}+d_2}(1-z_3)+1\right)z_4+ 1=0.
\end{align}
\end{subequations}
Computing exact expressions of $z_3$ and $z_4$ is not easy, but from Theorem~\ref{th:exists-q-multi-pv}, this fixed point exists.
The probabilities of IHN extinction and INH outbreak are:
\begin{align}
  \mathbb{P}_{\text{ext}}^{\text{IHN}} &=  \left\{
    \begin{aligned}
   &  \left(\frac{\varepsilon_{1} z_3+d_1}{\varepsilon_{1}+d_1}\right)^{\ell_{10}} \left(\frac{\varepsilon_{2} z_4+d_2}{\varepsilon_{2}+d_2}\right)^{\ell_{20}} z_3^{i_{10}}z_4^{i_{20}},  
   &\mathcal{R}_0^{\eqref{sys:ode-2p1v}}>1\\
    & 1, &\mathcal{R}_0^{\eqref{sys:ode-2p1v}}<1,  
    \end{aligned}
    \right. \nonumber \\
    \mathbb{P}_{\text{outbreak}}^{\text{IHN}} &=  1-\mathbb{P}_{\text{ext}}^{\text{IHN}} \label{eq:pro-inf12}
\end{align}

We perform a sensitivity analysis of the probability of an outbreak of the INH virus, assessing the impact of each parameter.
Note that here and throughout the computational work, we assume reproduction numbers larger than 1 unless otherwise specified.

Chum salmon has a lifespan of 3 to 6 years, during which females lay between 2,000 and 4,000 eggs \cite{noaa_chum_salmon}. Similarly, sockeye salmon has a lifespan of 4 to 5 years, with females laying between 2,000 and 4,500 eggs \cite{noaa_sockeye_salmon}. 
Assuming an 80\% hatch rate, the birth rates for Chum (species 1) and Sockeye (species 2) in \eqref{sys:ode-2p1v} are within the range of [355.55, 711.11] and [355.55, 800] per year, respectively. 
In a study \cite{foott2006infectious}, the incidence of infection after release ranged from 0\% to 20\%. 
These percentages do not indicate transmission rates. However, when considering the transmission rates of \eqref{sys:ode-2p1v}, we compute their values using the corresponding basic reproduction number.
Refer to Table~\ref{tab:param-and-range-2p1v} for parameter values ranges used in the sensitivity analysis. Moreover,
with an expression of the basic reproduction number of species 1 given by \(\mathcal{R}_{01}\), the transmission rates are given by
\begin{subequations}
    \begin{align}
        \beta_{11} &= \frac{\mathcal{R}_{01}(\varepsilon_{1}+d_1)(\gamma_{1}+d_1)}{\varepsilon_{1}S_1^0} \label{beta11}
        \intertext{and}
        \beta_{22} &= \frac{\mathcal{R}_{02}(\varepsilon_{2}+d_2)(\gamma_{2}+d_2)}{\varepsilon_{2}S_2^0}.\label{beta22}
    \end{align}
\end{subequations}
We assume that the infection rate of species 2 by species 1 and species 1 by species 2 is five times higher than the infection rate within species 1 and species 2, respectively, i.e., \(\beta_{21} = 5 \beta_{11}\) and \(\beta_{12} = 5 \beta_{22}\).

\begin{table}[htbp]
    \centering
    \begin{tabular}{clccc}
    \toprule
     & \textbf{Meaning} & \textbf{Range}/day &  \\
    \midrule
    $b_1$ & Birth rate species 1 & $[1, 5]$ &   \cite{noaa_chum_salmon}\\
    $b_2$ & Birth rate species 2 & $ [1, 5]$ & \cite{noaa_sockeye_salmon}\\
    $\varepsilon_{1}$ & Incubation rate of 1 & $[ 0.02, 0.2] $  &\cite{spickler2007infectious}\\
    $\varepsilon_{2}$ & Incubation rate of 2 & $[ 0.02, 0.2]$  &\cite{spickler2007infectious}\\
    $\gamma_{1}$ & Recovery rate of 1 & $[0.1, 0.33]$ &\cite{LAPATRA2000445} \\
    $\gamma_{2}$ & Recovery rate of 2 & $[0.1, 0.33]$ &\cite{LAPATRA2000445} \\
    $d_1$ & Mortality rate of 1 & [$\frac{1}{6}$, $\frac{1}{2}$]$\times$ $\frac{1}{365}$] &\\
    $d_2$ & Mortality rate of 2 & [$\frac{1}{5}$, $\frac{1}{2}$]$\times$ $\frac{1}{365}$]   & \\
    \bottomrule
    \end{tabular}
    \caption{Parameter ranges and values for IHN transmission between Chum Salmon (species 1) and Sockeye Salmon (species 2).}
    \label{tab:param-and-range-2p1v}
\end{table}

Figure~\ref{fig:PRCC_P} presents a sensitivity analysis of the probability of disease outbreak, illustrating the significant impact of incubation rates on the probability of an IHN outbreak. 
Also important are demographic parameters $b_i$ and $d_i$.
\begin{figure}[H]
    \centering
    \includegraphics[width=0.8\linewidth]{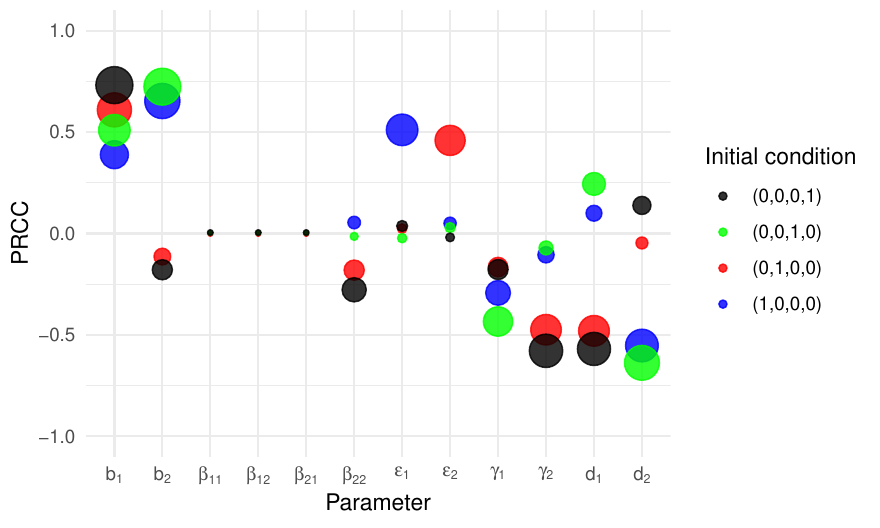}
    \caption{Partial rank correlation coefficient (PRCC) of the probability \eqref{eq:pro-inf12} of IHN outbreak for different initial conditions $y_0=(l_{10},l_{20}, i_{10}, i_{20}) \in (e_1,e_2,e_3,e_4)$. The range of parameter values remaining in Table~\ref{tab:param-and-range-2p1v}.}
    \label{fig:PRCC_P}
\end{figure}

\subsection{Transmission from wild to farmed fish ($P_1\to P_2,P_2\not\to P_1$)}
\label{subsec:wild-to-farm}
Interactions between wild and farmed fish populations are complex and crucial aspects of ecological and economic landscapes. 
In the scenario where wild fish could penetrate the space where fish are farmed, the potential transmission of diseases between these populations becomes a significant concern \cite{arechavala2013reared}. 
Note that contamination of wild fish by farm fish is also of concern \cite{johansen2011disease,krkosek2007declining}.

The situation we consider here has wild fish (species 1) able to introduce diseases to farmed populations, while farmed fish (species 2) cannot transmit diseases to wild fish, because they are raised in captivity. 
This asymmetry in transmission dynamics implies that \eqref{sys:CTMC_2p1v} is considered with $\beta_{21}\neq 0$, while $\beta_{12}=0$.

It follows that the fixed point relations \eqref{eq:fixed-pt-2p1v} take the form
\begin{subequations}\label{eq:fixed-pt-2p1v-2noinf1}
    \begin{align}
        & z_1=\frac{\varepsilon_{1} z_3+d_1}{\varepsilon_{1}+d_1}, 
        \quad z_2=\frac{\varepsilon_{2} z_4+d_2}{\varepsilon_{2}+d_2},
        \quad z_3=\frac{1}{\mathcal{R}_{01}},
        \label{eq:fixed-pt-2p1v-2noinf1-q1q2q3}\\
         & \mathcal{R}_{02}z_4^2-\left(\mathcal{R}_{02}+c+1\right)z_4+ 1=0. \label{eq:fixed-pt-2p1v-2noinf1-q4}
    \end{align}
\end{subequations}
where we have denoted $c=\beta_{21}(\mathcal{R}_{01}-1)(\gamma_{1}+d_1)/(\beta_{11}(\gamma_{2}+d_2))$ and 
\begin{equation}\label{eq:R02}
    \mathcal{R}_{02}=\frac{\beta_{22}\varepsilon_{2}S_2^0}{(\varepsilon_{2}+d_1)(\gamma_{2}+d_1)}      
\end{equation}
the basic reproduction for the pathogen in species 2 \emph{in the absence of contact} with species 1.

To solve \eqref{eq:fixed-pt-2p1v-2noinf1-q4}, we first consider the discriminant and using the fact the $c>0$, 
\[
D = \left(\mathcal{R}_{02} + c + 1\right)^2 - 4\mathcal{R}_{02}> \left(\mathcal{R}_{02} + 1\right)^2 - 4\mathcal{R}_{02}=(\mathcal{R}_{02}-1)^2 > 0.
\]
Therefore, \eqref{eq:fixed-pt-2p1v-2noinf1-q4} has two real solutions $z_4^-$ and $z_4^+$ given by
\[ 
z_4^- = \frac{\left(\mathcal{R}_{02} + c + 1\right) - \sqrt{D}}{2\mathcal{R}_{02}}
\quad\text{and}\quad
z_4^+ = \frac{\left(\mathcal{R}_{02} + c + 1\right) + \sqrt{D}}{2\mathcal{R}_{02}}.
\]
Given that $\mathcal{R}_{02} > 1$, both $z_4^-$ and $z_4^+$ are positive, $z_4^+ > z_4^-$, and $z_4^+z_4^- = 1/\mathcal{R}_{02}$.
Moreover, since $ D > (\mathcal{R}_{02}-1)^2 $, we obtain
\begin{equation*}
    z_4^+ > 1+ \frac{c}{2\mathcal{R}_{02}}  \Longrightarrow  z_4^+z_4^->  z_4^- \Longrightarrow   z_4^- <\frac{1}{\mathcal{R}_{02}}.
\end{equation*}
Thus, $z_4^{+}> 1$, $z_4^{-} <1$ and solution to the fixed point problem has
\begin{equation}\label{eq:exp-q4-2p1v-no1}
z_4:=z_4^-= \frac{\left(\mathcal{R}_{02} +c+1\right) - \sqrt{\left(\mathcal{R}_{02}+c+1\right)^2 - 4\mathcal{R}_{02}}}{2}.
\end{equation}
Then the probability of extinction of the disease in a wild-to-farmed ($W\to F$) context
\begin{equation*}
  \mathbb{P}_{\text{ext}}^{W\to F}=  \left\{
    \begin{aligned}
   & z_1^{\ell_{10}}z_2^{\ell_{20}}z_3^{i_{10}}z_4^{i_{20}}  &\mathcal{R}_{01}, \mathcal{R}_{02}>1\\
    & 1, &\mathcal{R}_{01},\mathcal{R}_{02}\leq 1,
    \end{aligned}
    \right. 
\end{equation*}
where $z_1,z_2$ and $z_3$ are given by \eqref{eq:fixed-pt-2p1v-2noinf1-q1q2q3} and $z_4$ by \eqref{eq:exp-q4-2p1v-no1}; furthermore,
\begin{equation}\label{eq:P_2nonfec1}
    \mathbb{P}_{\text{outbreak}}^{W\to F}=  1- \mathbb{P}_{\text{ext}}^{W\to F}.
\end{equation}

For the sensitivity analysis of \eqref{eq:P_2nonfec1}, we utilized the assumed parameter values provided in Table~\ref{tab:param_ranges_2noninfec1}. The parameter ranges for $\beta_{11}$ and $\beta_{21}$ were computed using the relations given in \eqref{beta11} and \eqref{beta22}.

\begin{table}[H]
    \centering
    \begin{tabular}{lcc}
        \toprule
        \textbf{Parameter} & \textbf{Range} & \textbf{Value} \\
        \midrule
        $\mathcal{R}_0$ & $[0.1, 5]$ & - \\
        $b_1$ & $[2, 20]$ & 8 \\
        $b_2$ & $[2, 20]$ & 10 \\
        $\varepsilon_{1}$ & $[0.01, 0.2]$ & 0.05 \\
        $\varepsilon_{2}$ & $[0.01, 0.2]$ & 0.05 \\
        $\gamma_{1}$ & $[0.01, 0.1]$ & 0.02 \\
        $\gamma_{2}$ & $[0.01, 0.1]$ & 0.02 \\
        $d_1$ & [$\frac{1}{10}$, $\frac{1}{2}$]$\times$ $\frac{1}{365}$ & -\\
        $d_2$ & [$\frac{1}{10}$, $\frac{1}{2}$]$\times$ $\frac{1}{365}$& - \\
        \bottomrule
    \end{tabular}
    \caption{Parameter ranges and values for transmission from wild to farmed fish.}
    \label{tab:param_ranges_2noninfec1}
\end{table}

Sensitivity analysis of disease outbreak probability $\mathbb{P}_{\text{outbreak}}^{W\to F}$ \eqref{eq:P_2nonfec1} (Figure~\ref{fig:PRCC_P_2noinfec1}) shows that birth ($b_1$), death ($d_1$) and recovery ($\gamma_{1}$) rates of species 1 are key drivers of the probability of an outbreak, regardless of initial conditions. 
Interestingly, for the incubation rates ($\varepsilon_{1}$ and $\varepsilon_{2}$) of species 1 and 2, respectively, there is a notable difference in impact based on the initial condition. 
The PRCC is 0.3 and 0.22 for the initial condition starting with one latent individual in Species 1 and Species 2, respectively, while it is very small when initially starting with infected individuals.

In contrast, factors such as infection rates $\beta$ of both species and death rate of species 2 exert minimal influence on the probability of an outbreak.
Moreover, specific cases involving parameters $\gamma_{2}$ and $d_1$ exhibit interesting patterns depending on where the infection starts. 
For instance, these parameters show a positive impact when the infection originates within species 1 but have a negative impact if it starts in species 2.

\begin{figure}[H]
    \centering
    \includegraphics[width=0.8\linewidth]{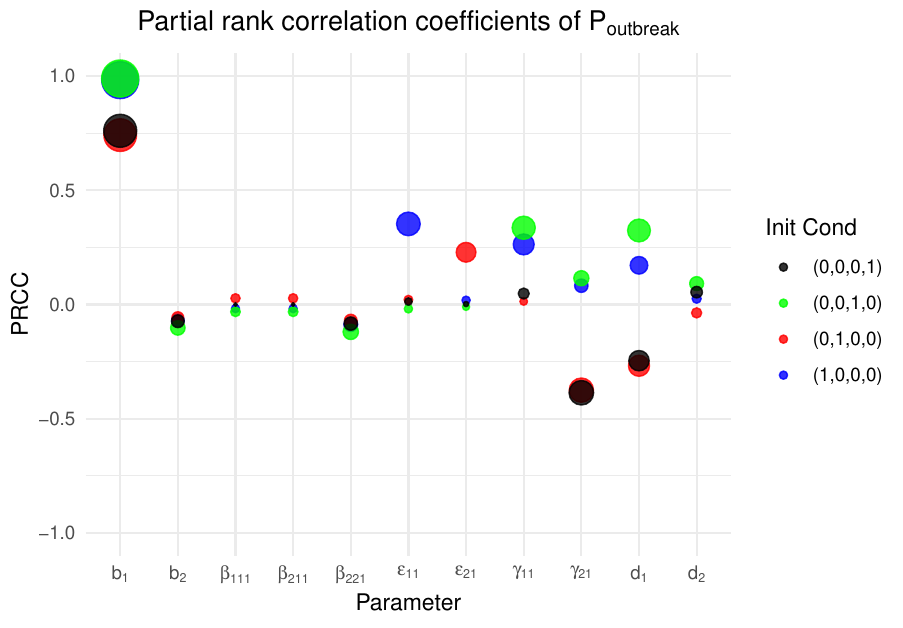}
    \caption{PRCC of the probability \eqref{eq:P_2nonfec1} of disease outbreak $\mathbb{P}_{\text{outbreak}}^{W\to F}$ for four different initial conditions $y_0=(\ell_{10},\ell_{20}, i_{10}, i_{20}) \in (e_1,e_2,e_3,e_4)$. 
    Parameter values ranges in Table~\ref{tab:param_ranges_2noninfec1}.}
    \label{fig:PRCC_P_2noinfec1}
\end{figure}

Focusing on sensitivity to mortality rates, we observe that effects can be both positive and negative.
To investigate this further, we consider in Figure~\ref{fig:d_1_d_2_2noinf1} the probability $\mathbb{P}_{\text{outbreak}}^{W\to F}$  of an outbreak as a function of the mortality rates $d_1$ and $d_2$ of species 1 and 2, respectively. 
When the infections start with one infectious individual in species 1, the probability of an outbreak is between 0.55 and 1. 
When the disease starts with one infectious individual of species 2, this range narrows to between 0.9 and 1.
For initial infections in species 1, it is primarily the mortality rate of species 2 that impacts the value of this probability. 
In contrast, when the infection is initiated by species 2, both mortality rates play a role in determining the outcome. 
Specifically, as the mortality rate $d_1$ of species 1 increases, there is a decrease in the probability of an outbreak, while an increase in $d_2$ leads to a higher probability value. 
Thus, initial infections in different species can lead to varying probabilities of outbreak.

\begin{figure}[H]
    \centering
    \includegraphics[width=\textwidth]{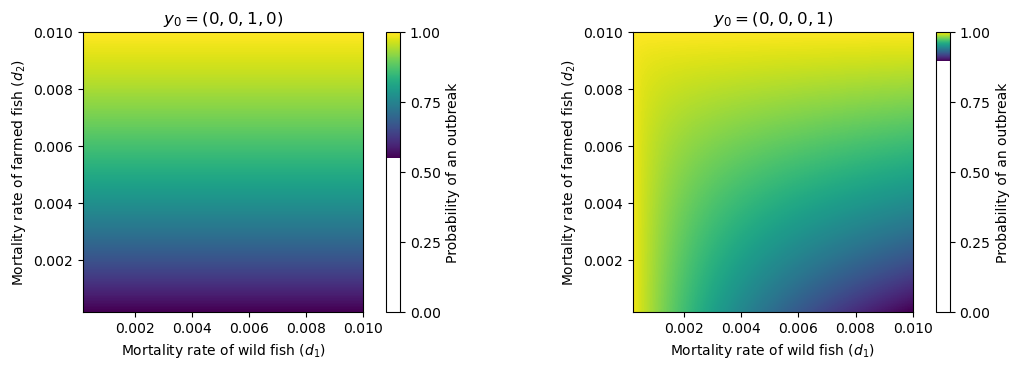}
    \caption{Probability \eqref{eq:P_2nonfec1} of disease outbreak as a function of mortality rates of wild and farmed fish.  
   For two initial conditions $y_0=(\ell_{10},\ell_{20}, i_{10},i_{20})\in (e_3,e_4)$. 
   Parameter values in Table~\ref{tab:param-and-range-2p1v}, with $\beta_{11}=\beta_{22}=10^{-5}$, $\beta_{21}=5\beta_{11}$.}
    \label{fig:d_1_d_2_2noinf1}
\end{figure}

\subsection{Viral Hemorrhagic Septicaemia ($P_1\to P_2,P_2\not\to$)}
\label{subsec:VHS}
Viral Hemorrhagic Septicaemia (VHS) is a highly contagious and deadly disease that affects various species of fish in both cultured and wild populations. It is prevalent in freshwater and marine environments across several regions of the Northern Hemisphere \cite{gagne2007,meyers1995,takano2001}. VHS is caused by the Viral Hemorrhagic Septicemia Virus (VHSV) and was first isolated in Alaska from skin lesion material of two Pacific cod \emph{Gadus macrocephalus} \cite{meyers1992identification}. 
VHSV is known for causing hemorrhagic septicemia, which leads to severe internal bleeding and organ damage in infected fish. The symptoms vary depending on the species and the stage of infection. Some common signs include lethargy, loss of appetite, abnormal swimming behaviour and external haemorrhaging.
Over 100 species of freshwater and marine fish have been reported to be naturally or experimentally susceptible to VHSV \cite{batts2020viral}. These species include the Pacific herring \emph{Clupea pallasii} \cite{kocan2001epidemiology,meyers1994association}.

We consider the latter as species 1 in \eqref{sys:ode-2p1v}. 
We suppose that the latent stage is an enzootic stage, the infectious stage is the combination of disease amplification and outbreak stage and the recovery stage combines recovery and refractory as suggested by \cite{mitro2008viral}. 
Now, while VHSV can infect a wide range of fish, not all infected individuals show signs of clinical disease and not all are capable of transmitting the virus to others \cite{ord1976viral}.
So we consider as species 2 the hybrid fry, Steelhead trout (\textit{Salmo gairdneri})$\times$Coho Salmon (\textit{Oncorhynchus kisutch}), which can get infected, but whose ability to transmit VHS has not been demonstrated \cite{ord1976viral}, further simplifying the situation by making the assumption that this hybrid fry cannot transmit VHS. As a consequence, \(\beta_{121} = 0\) and \(\beta_{221}=0\). Then, the fixed point relations \eqref{eq:fixed-pt-2p1v} become
\begin{subequations}\label{eq:fixed-pt-2p1v-2noinf}
        \begin{align}
       & z_1=\frac{\varepsilon_{1} z_3+d_1}{\varepsilon_{1}+d_1}, \; z_2=\frac{\varepsilon_{2} z_4+d_2}{\varepsilon_{2}+d_2}\\
    &\mathcal{R}_{01} z_3^2-(\mathcal{R}_{01}+1) z_3+1=0\\
     &\beta_{21} S_1^0 z_1 z_4 +\gamma_{2}+d_2=(\beta_{21} S_1^0+\gamma_{2}+d_2)z_4
\end{align}
\end{subequations}
where we have denoted
\begin{equation}\label{eq:R01}
 \mathcal{R}_{01}=\frac{\beta_{11}  \varepsilon_{11} S_1^0}{(\varepsilon_{11}+d_1)(\gamma_{11}+d_1)}
\end{equation}
the basic reproduction for species 1 and the virus \emph{in the absence of contact} with species 2. 

Solving \eqref{eq:fixed-pt-2p1v-2noinf} yields
\[(1,1,1,1) \; \text{and} \; \left(z_1,z_2,z_3,z_4\right)\]
 where 
 \begin{equation}\label{eq:exp-q-2p1v}
    z_1=\frac{\varepsilon_{1} z_3+d_1}{\varepsilon_{1}+d_1}, \;z_2=\frac{\varepsilon_{2} z_4+d_2}{\varepsilon_{2}+d_2}, \;z_3=\frac{1}{\mathcal{R}_{01}} , \; z_4=\frac{1}{c +1}
 \end{equation}
with $c=\beta_{21}(\mathcal{R}_{01}-1)(\gamma_{1}+d_1)/(\beta_{11}(\gamma_{2}+d_2))$.

Therefore, the probability of extinction (no outbreak) is one if $\mathcal{R}_{01} \leq 1$, but less than one if $\mathcal{R}_{01}>1$. Given the initial conditions $L_1(0)=\ell_{10}$, $I_1(0)=i_{10}$, $L_2(0)=\ell_{20}$, and $I_2(0)=i_{20}$, it follows from the independent branching process approximation that the probabilities of the disease extinction and the disease outbreak are

\begin{align}
    \mathbb{P}_{\text{ext}}^{\text{VHS}} &= \left\{
    \begin{aligned}
    & z_1^{\ell_{10}}z_2^{\ell_{20}}z_3^{i_{10}}z_4^{i_{20}},  
    & \mathcal{R}_{01}>1 \\
    & 1, &\mathcal{R}_{01}\leq 1,
    \end{aligned}
    \right. \nonumber\\
    \mathbb{P}_{\text{outbreak}}^{\text{VHS}} &= 1- \mathbb{P}_{\text{ext}}^{\text{VHS}},
    \label{eq:P_2noinfec}
\end{align}
where the expressions of $z_1,z_2,z_3$ and $z_4$ are given in \eqref{eq:exp-q-2p1v}.

On average, a female Pacific herring lays 20,000 eggs each year \cite{adfg_herring}.
If we suppose that the viable percentage of the population is around 60\%, then $b_1=12000/year\approx 33/day$. 
In laboratory experiments with infected herring, it was observed that shed VHSV can be identified in water within 4-5 days after exposure (PE), preceding the onset of host mortality due to the disease. The peak of viral shedding occurs between 6-10 days PE \cite{garver2021characterization}. 
We then consider the incubation period of 5-100 days.  
The duration of recovery varied depending on the phase of the epizootic. During the acute phase, which occurred around day 13 post-exposure, virus loads in tissues were significantly higher compared to the recovery phase, which spanned days 30 to 42 \cite{Hershberger2010}; for sensitivity analysis of this duration is taken between 20 and 100 days. 
We suppose hybrid fry averaging 12,000 eggs per year with 80\% viability, giving $b_2=9600/year=26.3/day$. 
For the sensitivity analysis of the probability $\mathbb{P}_{\text{outbreak}}^{\text{VHS}}$ of VHS outbreak, the range of parameter values are given in Table~\ref{tab:param-and-range-2p1v-2noinf}, \eqref{beta11} and \eqref{beta22} is used to compute infection rates. 
 
\begin{table}[H]
    \centering
    \begin{tabular}{clccc}
    \toprule
     & \textbf{Meaning} & \textbf{Range} & \textbf{Values} & Ref  \\
    \midrule
     $\mathcal{R}_{01}$ & Basic reproduction number & $[0.1, 5]$ &  &  \\
    $b_1$ & Birth rate Pacific herring & $[10, 50]$ & 33 &  \cite{adfg_herring} \\
    $b_2$ & Birth rate hybrid fry & $ [10, 30]$ & 26.3 & \\
    $\varepsilon_{1}$ & Incubation rate of Pacific herring & $[ 0.01, 0.2] $ & 0.02 &\cite{garver2021characterization}\\
    $\varepsilon_{2}$ & Incubation rate of hybrid fry  & $[ 0.01, 0.2]$ & 0.02 &\cite{garver2021characterization}\\
    $\gamma_{1}$ & Recovery rate of Pacific herring & $[0.01, 0.05]$ & 0.03 &\cite{Hershberger2010} \\
    $\gamma_{2}$ & Recovery rate of hybrid fry  & $[0.0
    1, 0.05]$ & 0.03 &\cite{Hershberger2010} \\
    $d_1$ & Mortality rate of species 1 & [$\frac{1}{15}$, $\frac{1}{2}$]$\times$ $\frac{1}{365}$ & 3.$10^{-4}$ & \cite{adfg_herring,stokesbury2002natural}\\
    $d_2$ & Mortality rate of species 2 &[$\frac{1}{15}$, $\frac{1}{2}$]$\times$ $\frac{1}{365}$ &  3.$10^{-4}$  & \\
    \bottomrule
    \end{tabular}
    \caption{Parameter range and values for \eqref{sys:ode-2p1v} with two species, Pacific herring (species 1) and hybrid fry (species 2) and one pathogen, VHS.}
    \label{tab:param-and-range-2p1v-2noinf}
\end{table}

In Figure~\ref{fig:PRCC_P_2noinfec}, a sensitivity analysis of the probability of disease outbreak $\mathbb{P}_{\text{outbreak}}^{\text{VHS}}$ is presented for four different initial conditions. 
The birth rate $b_1$ of species 1 stands out as having a significant impact on the probability of an outbreak regardless of the initial condition. 
When the infection begins with one latent and one infectious individual in species 2, parameters $b_1$, $\gamma_{1}$, $\gamma_{2}$, and $d_1$ all exhibit a similar level of impact on the probability of an outbreak.
The mortality rate $d_1$ of species 1 shows varying effects on the probability of an outbreak depending on where the disease originates. 
When the infection starts in species 2, $d_1$ has a significantly positive impact on the likelihood of an outbreak. However, this impact becomes negative when the disease originates in species 1.

Amongst all parameters analysed, certain factors show no discernible impact on the probability of an outbreak. This includes parameter $b_2$, as well as transmission rates $\beta_{11}$ and $\beta_{21}$. 

These findings highlight which variables have little to no effect on influencing whether a disease outbreak will occur. Moreover, 
 the influence of different initial conditions on outcomes is a crucial aspect to consider.

\begin{figure}[H]
    \centering
    \includegraphics[width=0.8\textwidth]{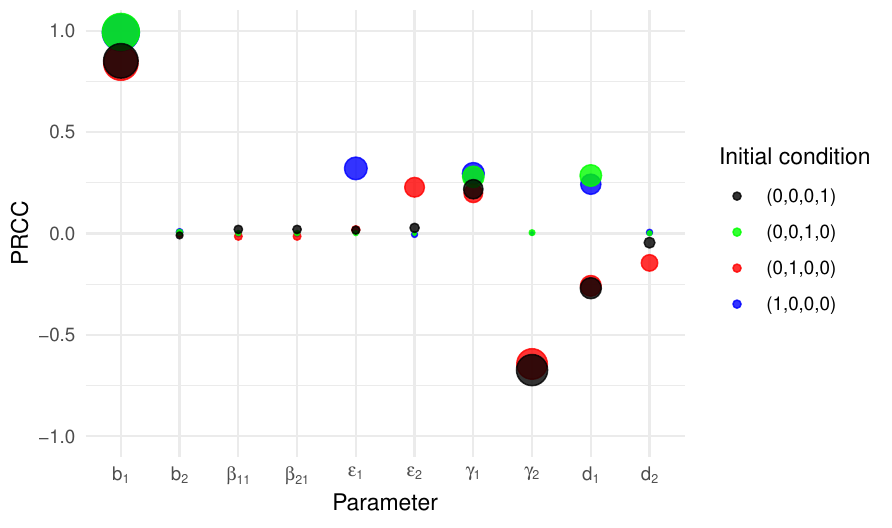}
    \caption{PRCC of the probability of disease outbreak \eqref{eq:P_2noinfec} for different initial conditions $y_0=(\ell_{10},\ell_{20}, i_{10}, i_{20}) \in (e_1,e_2,e_3,e_4)$. Parameter values ranges in Table~\ref{tab:param-and-range-2p1v-2noinf}. }
    \label{fig:PRCC_P_2noinfec}
\end{figure}

\subsection{The pathogen is endemic in one species ($P_1^\star\to P_2$)}
\label{subsec:intro-from-endemic}
The problem motivating this study concerns vagrant species coming in contact with resident species while bearing a pathogen the resident species has not been exposed to yet.

To model this situation, we artificially impose that species 1 be at an endemic equilibrium while species 2 be at a disease-free equilibrium.
A situation with mixed equilibria of this type is not possible in \eqref{sys:ode-2p1v}, so we ``cheat'': we start by assuming that both species are not in contact and that species 1 is at the endemic equilibrium $E_1^\star:=(S_1^\star,L_1^\star,I_1^\star,R_1^\star)$, in which
\begin{equation}
\label{eq:EEP_E1}
    E_1^\star = 
    \left(
    \frac{S_1^0}{\mathcal{R}_{01}},\; 
    \frac{(\mathcal{R}_{01}-1)(\gamma_{1}+d_1)d_1}{\beta_{11} \varepsilon_{1}},\;
    \frac{(\mathcal{R}_{01}-1)d_1}{\beta_{11}},\;
    \frac{(\mathcal{R}_{01}-1)\gamma_{1}}{\beta_{11}}
    \right),
\end{equation}
where $\mathcal{R}_{01}$ given by \eqref{eq:R01} is the basic reproduction for the pathogen in species 1 in the absence of contact with species 2.
We suppose that this determines the dynamics of the pathogen in species 1, which we now ignore except insofar as the infecting potential of the $I_1^\star$ individuals from species 1 infectious with the pathogen potentially coming into contact with susceptible individuals from species 2.
We then consider the second population, focusing on conditions leading to the disease becoming established there.

\subsubsection{The ODE introduction model}
\label{subsec:ODE-introduction-model}

Given a prevalence of infection $I_1^\star$ in species 1, the infection dynamics in species 2 is governed by 
\begin{subequations}\label{sys:ode-p2}
\begin{align}
\dot  S_2 &=b_2-\beta_{21}I_1^\star S_2-\beta_{22} S_2 I_2-d_2S_2\\
\dot  L_2 &=\beta_{21} I_1^\star S_2+\beta_{22} S_2 I_2-(\varepsilon_{2}+d_2) L_2 \label{sys:ode-p2_dL2}\\
\dot I_2 &=\varepsilon_{2} L_2-(\gamma_{2} +d_2) I_2 \\
\dot R_2& =\gamma_{2} I_2-d_2 R_2,
\end{align}
\end{subequations}
considered with nonnegative initial conditions.

Model \eqref{sys:ode-p2} is not a classic immigration model, since the term $\beta_{21}I_1^\star$ is factor of $S_2$, not a constant. 
However, similarly to immigration models, the term $\beta_{21}I_1^\star S_2$ in \eqref{sys:ode-p2_dL2} precludes the existence of a disease-free equilibrium for \eqref{sys:ode-p2}.
As a consequence, no basic reproduction number can be computed for \eqref{sys:ode-p2}. The method of \cite{almarashi2019effect} is not applicable here since immigration occurs at the \emph{per capita} rate $\beta_{21}I_1^\star$, not at a constant rate.
The term $\beta_{21}I_1^\star$ cannot be considered as encoding horizontal transmission either, at least not in the usual sense, since inflow into \eqref{sys:ode-p2_dL2} is function of $S_2$, not $I_2$.

As a consequence, a thorough analysis of properties of \eqref{sys:ode-p2} is difficult and beyond the scope of this work.
Instead, we focus on simple properties as well as computational considerations.

System \eqref{sys:ode-p2} admits a unique equilibrium, an endemic equilibrium taking the form $E_2^\star:=(S_2^\star,L_2^\star,I_2^\star,R_2^\star)$, where 
\[ 
S_2^\star=\frac{\varepsilon_2+d_2}{\beta_{21}I_1^\star+\beta_{22}I_2^\star}L_{2}^\star,\;L_2^\star=\frac{\gamma_2+d_2}{\varepsilon_2}I_{2}^\star,\;R_2^\star=\frac{\gamma_{2}I_2^\star}{d_2}
\] 
and \(I_2^\star\) is a root of the second order polynomial 
\begin{equation}\label{eq:poly_I2}
    \beta_{22}(I_2^\star)^2-((\mathcal{R}_{02}-1)d_2-\beta_{21}I_1^\star)I_2^\star-\frac{\beta_{21}d_2I_1^\star\mathcal{R}_{02}}{\beta_{22}}=0,
\end{equation}
where $\mathcal{R}_{02}$ given in \eqref{eq:R02} is the basic reproduction for species 2 and the virus \emph{in the absence of contact} with species 1.

Since the coefficients \(\beta_{22}>0\) and $\beta_{21}\mathcal{R}_{02}I_1^\star/\beta_{22}>0$, Descartes rule of signs implies that the polynomial~\eqref{eq:poly_I2} has a unique positive solution. The expression of this solution is given by
\begin{equation}
    I_2^\star = \frac{(\mathcal{R}_{02}-1)d_2-\beta_{21}I_1^\star+
    \sqrt{((\mathcal{R}_{02}-1)d_2-\beta_{21}I_1^\star)^2+4\mathcal{R}_{02}d_2\beta_{21}I_1^\star}}{2\beta_{22}}.
\end{equation}

\begin{figure}[H]
    \centering
    \includegraphics[width=0.8\textwidth]{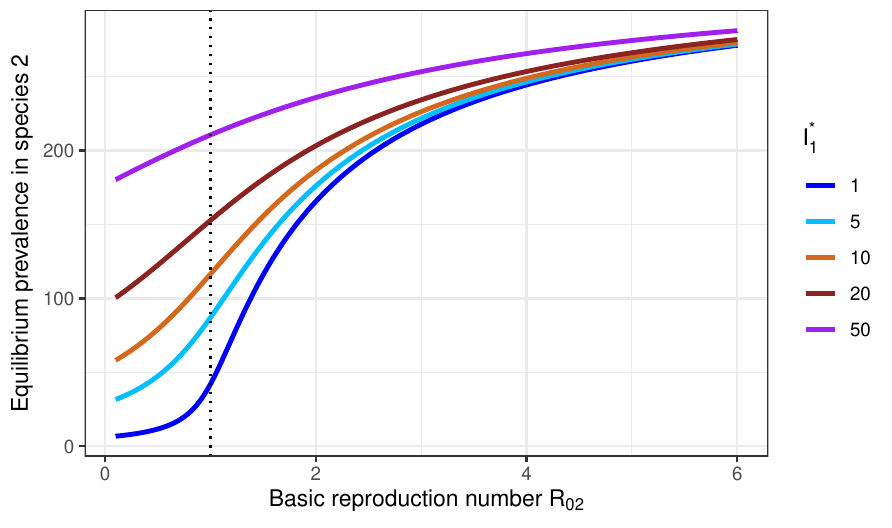}
    \caption{Equilibrium prevalence of infection $I_2^\star$ in population 2 as a function of the reproduction number $\R_{02}$ for the pathogen in species 2. The different curves correspond to different values of prevalence $I_1^\star$ in population 1.}
    \label{fig:I2star_fct_I1star_R02}
\end{figure}

In Figure~\ref{fig:I2star_fct_I1star_R02}, we show how (equilibrium) prevalence of infection in species 2 depends on prevalence of infection in species 1.
For low values of the reproduction number in species 2, the situation is quite dependent on the prevalence of infection in the ``introducing species'', but as the reproduction number increases, this dependence diminishes to the point of the situation becoming indistinguishable for large values of $\R_{02}$.

\subsubsection{The CTMC introduction model}
Note that it is not possible here to use a multitype branching process approximation, for roughly the same reasons that a basic reproduction number does not exist for the deterministic model \eqref{sys:ode-p2}.
Indeed, while branching processes incorporating \emph{immigration} exist \cite{heathcote1965branching,pakes1971branching}, they assume that immigration is a process that is independent from the population immigrants are joining.
In our model, this is not true, since ``immigration'' depends on the susceptible population $S_2$.
So we focus here on the computational analysis of the CTMC.
In all results presented here, 10,000 simulations (realisations) of the CTMC associated to \eqref{sys:ode-p2} are used for each data point.

\begin{figure}
    \centering
    \includegraphics[width=0.8\textwidth]{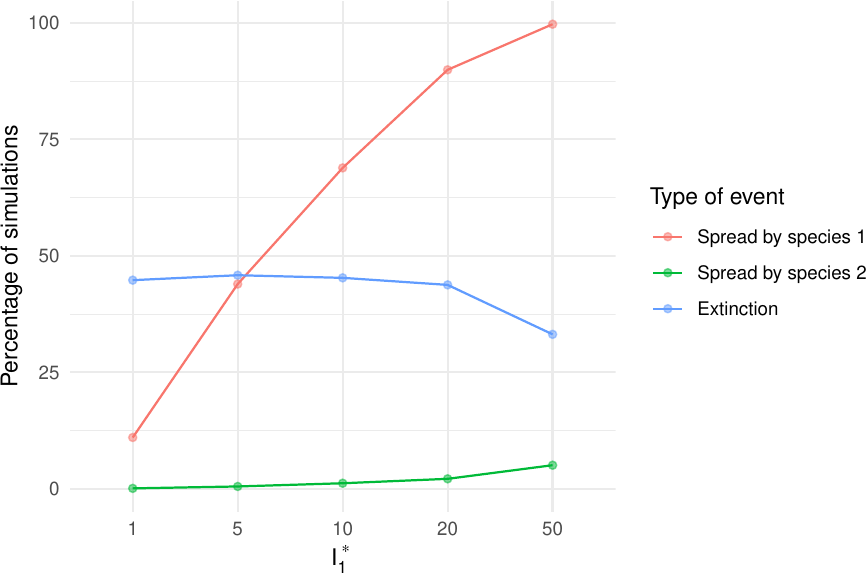}
    \caption{Percentage of 10K realisations in which spread by species 1 (introductions) and species 2 occurs. 
    Also shown is the percentage of realisations where introductions by species 1 are followed by extinctions.}
    \label{fig:pct-intro-comm-ext}
\end{figure}

Figure~\ref{fig:pct-intro-comm-ext} shows the percentage of 10K simulations in which at least one transmission occurs from species 1 to species 2 (red curve) and from species 2 to species 2 (green curve).
Simulations assume that $\R_{02}=1.5$ and are run for 90 days.
Note that the latter type of transmission always requires that introduction by species 1 has taken place.
Additionally, we show the percentage of realisations in which spread by species 1 occurs followed by extinctions, where we characterise an extinction of the infection in species 2 as a moment when the total number of infected in species 2, $L_2(t)+I_2(t)$, is zero after having been positive.

This illustrates a very important part of the introduction process. 
In keeping with the terminology in \cite{ArinoBoelleMillikenPortet2021,ArinoBajeuxPortetWatmough2020}, let us assess success of an introduction from the perspective of the pathogen.
In view of Figure~\ref{fig:pct-intro-comm-ext}, what drives successful introductions is the size of the introduction, i.e., the so-called \emph{inoculum size}.
However, before it becomes established in species 2 (the consequence of which is shown in a deterministic context by Figure~\ref{fig:I2star_fct_I1star_R02}), the infection in species 2 must ``survive'' the \emph{stochastic phase} of the epidemic; see, e.g., the one location case in \cite{ArinoMilliken2022b}.
This illustrates that spillover events are often unsuccessful, as observed in the zoonotic case with bats \cite{sanchez2022strategy}.

\begin{figure}
    \centering
    \includegraphics[width=\textwidth]{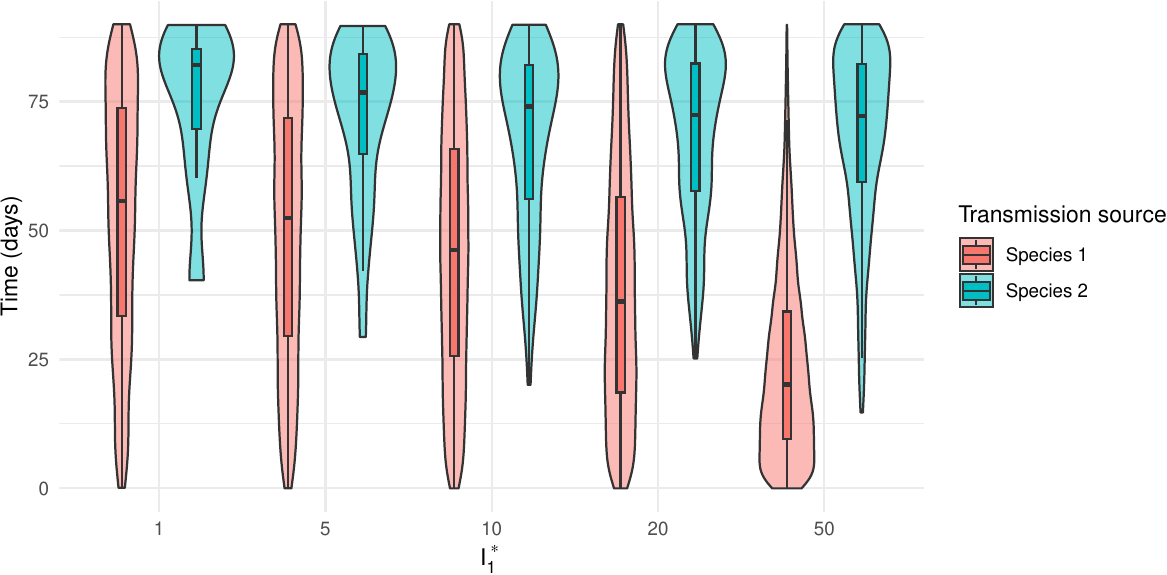}
    \caption{Distribution of the times at which the first infection event occurs in species 2 following contact with an infected individual from species 1 (red) and species 2 (blue), for different values of $I_1^\star$. Boxplots within the violins show the median, interquartile range and whiskers extents.}
    \label{fig:time-to-transmission-fct-source}
\end{figure}

To better understand this issue, consider Figure~\ref{fig:time-to-transmission-fct-source}, where we show violin plots of the distribution of times at which the first infection in species 2 arises stemming from contact with, respectively, species 1 and species 2.
These values are from the same simulations as used in Figure~\ref{fig:pct-intro-comm-ext} and thus represent the percentages shown there of 10K simulations.
First, consider infections with source species 1, i.e., introductions into species 2.
We observe that as the prevalence $I_1^\star$ in species 1 increases, the time to first introduction progressively diminishes, with the interquartile range covering smaller and smaller values.
Now consider the timing of infections originating from species 2, i.e., taking place after the infection has become somewhat established in species 2.
There, we observe that as the prevalence $I_1^\star$ increases, times to the first transmission decrease, but not as acutely as those to the first introduction.
This confirms our earlier observation that introduction and establishment, although evidently correlated, are not as tightly tied as can be expected.

\section{Case of two pathogens and two species} \label{sec:2p2v}
Here the dynamics with $P=2$ species and $V=2$ pathogens are considered.
While the cases in Section~\ref{sec:2p1v} are more tractable mathematically, the situation here is more realistic.
In practice, the collaboration motivating this work is interested in over a dozen pathogens potentially infecting four fish species.
We do not consider such a general situation here, but illustrate the computational complexities that arise even when $P=V=2$.
One particularly interesting feature is the existence of mixed equilibria, i.e., equilibria in which one of the pathogens is present and the other absent.

\subsection{Deterministic model and basic analysis}
\begin{subequations}\label{sys:ode-2p2v}
    \begin{align}
        &\dot S_1 = b_1 - \left(\sum_{q=1}^{2}\sum_{v=1}^{2}\beta_{1qv}I_{qv} + d_1 \right)S_1, 
        &\dot S_2 &= b_2 - \left(\sum_{q=1}^{2}\sum_{v=1}^{2}\beta_{2qv}I_{qv} + d_2 \right)S_2,  \\ \label{sys:ode-2p2v_S}
        &\dot L_{11} = \sum_{q=1}^{2}\beta_{1q1}I_{q1} S_1 - (\varepsilon_{11} + d_1)L_{11}, 
        &\dot L_{21}& = \sum_{q=1}^{2}\beta_{2q1}I_{q1} S_2 - (\varepsilon_{21} + d_2)L_{21},  \\ \label{sys:ode-2p2v_Lq1}
        &\dot L_{12} = \sum_{q=1}^{2}\beta_{1q2}I_{q2} S_1 - (\varepsilon_{12} + d_1)L_{12},  
        &\dot L_{22}& = \sum_{q=1}^{2}\beta_{2q2}I_{q2} S_2 - (\varepsilon_{22} + d_2)L_{22},  \\
        &\dot I_{11} = \varepsilon_{11}L_{11} - (\gamma_{11} + d_1)I_{11}, 
        &\dot I_{21}& = \varepsilon_{21}L_{21} - (\gamma_{21} + d_2)I_{21},  \\ \label{sys:ode-2p2v_Iq1}
        &\dot I_{12} = \varepsilon_{12}L_{12} - (\gamma_{12} + d_1)I_{12}, 
        &\dot I_{22}& = \varepsilon_{22}L_{22} - (\gamma_{22} + d_2)I_{22},  \\
        &\dot R_1 = \sum_{v=1}^{2}\gamma_{1v}I_{1v} - d_1R_1, 
        &\dot R_2& = \sum_{v=1}^{2}\gamma_{2v}I_{2v} - d_2R_2. \label{sys:ode-2p2v_R}
    \end{align}
\end{subequations}

The disease-free equilibrium of~\eqref{sys:ode-2p2v} is 
\begin{equation}
\bE_0^{\eqref{sys:ode-2p2v}}=(S_1^0,S_2^0,0_{\mathbb{R}^{10}}), \; \text{ with } S_1^0=\frac{b_1}{d_1} \text{ and }  S_2^0=\frac{b_2}{d_2}.
\end{equation}
To compute the basic reproduction number $\R_0$, observe that the matrices $\mathbf{G}_{12}$ and $\mathbf{W}$ derived in Section~\ref{sec:basic_analysis_ODE} take here the form
\[
\mathbf{G}_{12}=
\begin{bmatrix}
\mathbf{G}_{12}^{11}  & \mathbf{G}_{12}^{12}  \\
\mathbf{G}_{12}^{21} & \mathbf{G}_{12}^{22}  \\
\end{bmatrix},
\]
where $\mathbf{G}_{12}^{11} =S_1^0\diag(\beta_{111},\beta_{112})$, $\mathbf{G}_{12}^{12}=S_1^0\diag(\beta_{121},\beta_{122})$, $\mathbf{G}_{12}^{21}= S_2^0\diag(\beta_{211},\beta_{212})$ and $\mathbf{G}_{12}^{22}=S_2^0\diag(\beta_{221},\beta_{222})$ are diagonal matrices, and $S_1^0$ and $S_2^0$ are scalars. 
The matrix $\mathbf{W}$ is block lower triangular, with blocks
\[
\mathbf{W_{11}}=\diag(d_{1} + \varepsilon_{11}, d_{1} + \varepsilon_{12},d_{2} + \varepsilon_{21},d_{2} + \varepsilon_{22}),
\]
\[
\mathbf{W_{21}}=\diag(\varepsilon_{11},\varepsilon_{12},\varepsilon_{21},\varepsilon_{22}) 
\]
and
\[
\mathbf{W_{22}}=\diag(d_{1} + \gamma_{11} ,d_{1} + \gamma_{12},d_{2} + \gamma_{21},d_{2} + \gamma_{22}).
\]

The basic reproduction number of system~\eqref{sys:ode-2p2v} following the formula in equation~\eqref{eq:R0-multi-scpeies} is given by:
\[\mathcal{R}_0^{\eqref{sys:ode-2p2v}}=\max(\mathcal{R}_0^1, \mathcal{R}_0^2)\]
where \(\mathcal{R}_0^1=\rho(\mathcal{B}_1)\) and \(\mathcal{R}_0^2=\rho(\mathcal{B}_2)\), with the matrices \(\mathcal{B}_1\) and \(\mathcal{B}_2\) 
obtained by using a form of a matrix in equation~\eqref{eq:matrix_Bv}, and then
\begin{equation}\label{eq:matrix_Bv_2p2v}
\mathcal{B}_1=\displaystyle \begin{pmatrix}\frac{\beta_{111} \varepsilon_{11}S_1^0}{(\varepsilon_{11} + d_1)(\gamma_{11} + d_1)} & \frac{\beta_{121} \varepsilon_{21} S_1^0}{(\varepsilon_{21} + d_2)(\gamma_{21} + d_2)} \\
\frac{ \beta_{211} \varepsilon_{11} S_2^0}{(\varepsilon_{11} +d_1) (\gamma_{11} + d_1)} & \frac{\beta_{221} \varepsilon_{21}S_{2}^0}{( \varepsilon_{21} + d_2)(\gamma_{21} + d_2)} \end{pmatrix},
\mathcal{B}_2=\displaystyle \begin{pmatrix}
\frac{ \beta_{112} \varepsilon_{12} S_1^0}{( \varepsilon_{12} + d_1)(\gamma_{12} + d_1)} & \frac{\beta_{122} \varepsilon_{22} S_1^0}{(\varepsilon_{22} + d_2)(\gamma_{22} + d_2)} \\
\frac{ \beta_{212} \varepsilon_{12}S_2^0}{(\varepsilon_{12} + d_1)(\gamma_{12} + d_1) } &  \frac{ \beta_{222} \varepsilon_{22}S_2^0}{( \varepsilon_{22} + d_2)(\gamma_{22} + d_2)}
\end{pmatrix} 
\end{equation}

Note that the result provided by using $\R_0^{\eqref{sys:ode-2p2v}}$ does not show the whole picture.
Indeed, one interesting characteristic of \eqref{sys:ODE-multi-species} is that the viruses function in a disconnected way.
This can be inferred from the reducibility of the system discussed in Appendix~\ref{app:normal-form-matrices}.
\begin{theorem}\label{th:exist-mixed-EP}
    Consider \eqref{sys:ode-2p2v} with $\R_{01}>1$ and $\R_{02}\leq 1$.
    Then the DFE of \eqref{sys:ode-2p2v} is unstable and consists of a mixed equilibrium wherein pathogen 1 is present at an endemic level and pathogen 2 is absent. Stability of the pathogen-2-free equilibrium is global and asymptotic with respect to the pathogen-2 subsystem.
\end{theorem}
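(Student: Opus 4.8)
The plan is to exploit the decoupling of the two pathogens noted just before the statement: the pathogen-$1$ infected compartments $L_{11},L_{21},I_{11},I_{21}$ and the pathogen-$2$ infected compartments $L_{12},L_{22},I_{12},I_{22}$ are coupled only through the shared susceptible compartments $S_1,S_2$. Instability of $\bE_0^{\eqref{sys:ode-2p2v}}$ is then immediate: since $\mathcal{R}_0^{\eqref{sys:ode-2p2v}}=\max(\mathcal{R}_0^1,\mathcal{R}_0^2)=\mathcal{R}_0^1>1$, the $P=V=2$ instance of Lemma~\ref{lem:las-multi-species} (equivalently \cite[Theorem~2]{VdDWatmough2002}) applies.

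Next I would construct the mixed equilibrium. The hyperplane $\{L_{12}=L_{22}=I_{12}=I_{22}=0\}$ is invariant, and on it \eqref{sys:ode-2p2v} reduces exactly to the two-species, single-pathogen model \eqref{sys:ode-2p1v} carrying only pathogen $1$, with reproduction number $\mathcal{R}_0^1$. The sought point is the endemic equilibrium of this reduced system, augmented by zeros in the pathogen-$2$ compartments. To produce it I would set all derivatives to zero, use the $I$- and $L$-equations to express $L_{p1}$ and the forces of infection through $I_{11},I_{21}$, and reduce the problem to a fixed point for $(I_{11},I_{21})$ in a compact positively invariant box, obtaining a positive solution when $\mathcal{R}_0^1>1$ via a Brouwer/monotone-map argument or, alternatively, via uniform persistence together with the dissipativity furnished by $\Omega$.

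For the global extinction of pathogen $2$ I would use the comparison/Lyapunov method of the next-generation framework. On $\Omega$ one has $S_p(t)\le S_p^0$, so the pathogen-$2$ infected subsystem, which is cooperative, is dominated entrywise by the linear system $\dot{\bx}=(\mathbf F_2-\mathbf V_2)\bx$ whose next-generation matrix $\mathbf F_2\mathbf V_2^{-1}$ has spectral radius $\mathcal{R}_0^2=\rho(\mathcal{B}_2)\le1$. With $\mathbf w\ge\bm{0}$ the left Perron eigenvector of $\mathbf V_2^{-1}\mathbf F_2$ and $Q=\mathbf w^{\mathsf T}\mathbf V_2^{-1}\bx$, the bound $S_p\le S_p^0$ yields $\dot Q\le(\mathcal{R}_0^2-1)\,\mathbf w^{\mathsf T}\bx\le0$, whence LaSalle's invariance principle forces the pathogen-$2$ components to $\bm{0}$.

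The main obstacle is the existence of the coupled pathogen-$1$ endemic equilibrium: unlike the single-species point $E_1^\star$ in~\eqref{eq:EEP_E1}, the cross-infection terms $\beta_{121},\beta_{211}$ preclude a closed form, so existence (and ideally uniqueness) must be argued abstractly rather than by direct computation. A secondary difficulty is the boundary case $\mathcal{R}_0^2=1$, where $s(\mathbf F_2-\mathbf V_2)=0$ and linear comparison no longer forces decay; there the LaSalle step must invoke the strict inequality $S_p<S_p^0$ that holds once pathogen $1$ is endemic, making the effective reproduction number of pathogen $2$ strictly below one on the relevant invariant set.
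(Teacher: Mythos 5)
Your overall architecture coincides with the paper's---instability from Lemma~\ref{lem:las-multi-species}, existence of the mixed equilibrium on the invariant pathogen-2-free hyperplane where \eqref{sys:ode-2p2v} reduces to \eqref{sys:ode-2p1v}, and extinction of pathogen 2 by subcriticality---but the two substantive steps are carried out quite differently, and the step you defer as the ``main obstacle'' is exactly where the paper is most concrete. Rather than arguing abstractly, Appendix~\ref{app:existence-mixed-EP} eliminates $S^\star$, $L^\star$, $R^\star$ through the equilibrium relations and $b_i-d_iN_i^\star=0$, reducing existence to the planar system of quadrics \eqref{sys:sol_ode_I_xy} in $(x,y)=(I_{11}^\star,I_{21}^\star)$: both curves are hyperbolas, one with a vertical and one with a horizontal asymptote, and an elementary analysis of intercepts, tangent directions at the origin, centres and asymptotes shows they meet exactly once in the open first quadrant, additionally localising the equilibrium in the box $\bigl(0,d_1\R_{01}/\beta_{111}\bigr]\times\bigl(0,d_2\R_{02}/\beta_{221}\bigr]$---uniqueness and localisation that your Brouwer/uniform-persistence route would not deliver without further work. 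Be aware also of a hypothesis mismatch: in the appendix, $\R_{01}$ and $\R_{02}$ are the \emph{within-species} reproduction numbers of pathogen~1 (as in \eqref{eq:R01}--\eqref{eq:R02}), and $\R_{02}\le 1$ serves to make the $y$-intercept of $\Gamma_2$ nonpositive; you read them as the pathogen-wise numbers $\R_0^1=\rho(\mathcal{B}_1)$ and $\R_0^2=\rho(\mathcal{B}_2)$, which is the natural reading of the statement (and of Conjecture~\ref{conj:exist-mixed-EP}) but means your persistence argument would be proving a more general existence claim (only $\rho(\mathcal{B}_1)>1$) than the appendix does; the paper's hypotheses imply yours, since $\rho(\mathcal{B}_1)\ge\R_{01}$. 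For pathogen-2 extinction the paper merely invokes Theorem~\ref{th:GAS_R0} plus reducibility (the Kamgang--Sallet machinery of Appendix~\ref{app:proof_GAS_R0} applied blockwise), whereas your $Q=\mathbf{w}^{T}\mathbf{V}_2^{-1}\bx$ comparison is the Shuai--van den Driessche construction: equally valid and more self-contained. Finally, your caveat about the boundary case is well taken on both routes: the theorem allows $\R_{02}=1$, but Theorem~\ref{th:GAS_R0} is stated for $\R_0<1$ and the appendix itself uses the strict inequality (to get $y=d_2(\R_{02}-1)/\beta_{221}<0$), so the equality case is glossed over in the paper and would require exactly the kind of refinement you sketch.
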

The existence part of the proof of this result is shown in Appendix~\ref{app:existence-mixed-EP}. 
Global asymptotic stability of the pathogen-2-free equilibrium follows directly from Theorem~\ref{th:GAS_R0} and reducibility of the system.

\subsection{Branching process approximation}

Let $Z=(L_{11},  L_{12},L_{21},  L_{22},  I_{11},  I_{12},I_{21},  I_{22})$ be the multitype branching process approximation of CTMC $X^{2,2}(t)$ with infected types $\ell_{11}$,  $\ell_{12}$, $\ell_{21}$, $\ell_{22}$, $i_{11}$, $i_{12}$, $i_{21}$ and $i_{22}$. 
The p.g.f. for $\bu=(u_{11}^\ell,u_{12}^\ell,u_{21}^\ell,u_{22}^\ell,u_{11}^i,u_{21}^i,u_{21}^i,u_{22}^i)$ is defined as
\begin{equation}\label{eq:pdf-2p2v}
 \bF(\bu)=(f_{11}^\ell(\bu),f_{12}^\ell(\bu),f_{21}^\ell(\bu),f_{22}^\ell(\bu),f_{11}^i(\bu),f_{21}^i(\bu),f_{21}^i(\bu),f_{22}^i(\bu)),
\end{equation}
where, for $p,v=1,2$,
\begin{subequations}\label{sys:pgf-2p2v}
    \begin{align}
        f_{pv}^\ell(\bu) &= \frac{\varepsilon_{pv}u_{pv}^i+d_p}{\varepsilon_{pv}+d_p}, \label{sys:pgf-2p2v_flpq} \\
        f_{pv}^i(\bu) &= \frac{\left(\sum_{q=1}^{2} \beta_{qpv}S_{q}^0 u_{pv}^\ell\right)u_{pv}^i+\gamma_{pv}+d_p}{\Lambda_{pv}}.
        \label{sys:pgf-2p2v_fipq}
    \end{align}
\end{subequations}
The Jacobian matrix then takes the form
\[
DF(u)=\begin{bmatrix}
\begin{array}{c:c}
0 & \mathbb{M}_{12} \\
\hdashline
\mathbb{M}_{21} & \mathbb{M}_{22} \\
\end{array}
\end{bmatrix},
\]
where
\[
\mathbb{M}_{12}= 
\diag\left(
\frac{\varepsilon_{11}}{  \varepsilon_{11}+d_1},
\frac{\varepsilon_{12}}{  \varepsilon_{12}+d_1},
\frac{\varepsilon_{21}}{  \varepsilon_{21}+d_2},
\frac{\varepsilon_{22}}{  \varepsilon_{22}+d_2}
\right),
\]
\[
\mathbb{M}_{21}= \begin{pmatrix}
    \frac{\beta_{111} S_1^0 u_{11}^i}{\Lambda_{11}} & 0 & \frac{ \beta_{121} S_2^0u_{11}^i}{\Lambda_{11}} & 0 \\
0 & \frac{ \beta_{112}S_1^0 u_{12}^i}{\Lambda_{12}} & 0 & \frac{\beta_{122}S_2^0  u_{12}^i}{\Lambda_{12}} \\
\frac{ \beta_{211}S_1^0 u_{21}^i}{\Lambda_{21}} & 0 & \frac{S_2^0 \beta_{221} u_{21}^i}{\Lambda_{21}} & 0 \\
0 & \frac{ \beta_{212}S_1^0 u_{22}^i}{\Lambda_{22}} & 0 & \frac{ \beta_{222}S_2^0 u_{22}^i}{\Lambda_{22}}
\end{pmatrix}
\]
and
\begin{multline*}
\mathbb{M}_{22} = 
\diag
\Biggl(
\frac{ \beta_{111}S_1^0 u_{11}^\ell + \beta_{121} S_2^0 u_{21}^\ell}{\Lambda_{11}},
\frac{ \beta_{112} S_1^0u_{12}^\ell +  \beta_{122}S_2^0 u_{22}^\ell}{\Lambda_{12}},\\
\frac{ \beta_{211}S_1^0 u_{11}^\ell +  \beta_{221}S_2^0 u_{21}^\ell}{\Lambda_{21}},
\frac{ \beta_{212}S_1^0 u_{12}^\ell +\beta_{222}  S_2^0 u_{22}^\ell}{\Lambda_{22}}
\Biggr).  
\end{multline*}

Theorem~\ref{th:exists-q-multi-pv} applies here.
Given $L_{11}(0)=\ell_{110}$, $L_{12}(0)=\ell_{120}$, $L_{21}(0)=\ell_{210}$, $L_{22}(0)=\ell_{220}$, $I_{11}(0)=i_{110}$, $I_{12}(0)=i_{120}$, $I_{21}(0)=i_{210}$ and $I_{22}(0)=i_{220}$, it follows from the independent branching process approximation that the probabilities of extinction and disease outbreak are:
\begin{equation*}
  \mathbb{P}_{\text{ext}}^{\eqref{eq:pdf-2p2v}}=  \left\{
    \begin{aligned}
   & (z_{11}^\ell)^{\ell_{110}}(z_{12}^\ell)^{\ell_{120}}(z_{21}^\ell)^{\ell_{210}}(z_{22}^\ell)^{\ell_{220}} (z_{11}^i)^{i_{110}}(z_{12}^i)^{i_{120}}(z_{21}^i)^{i_{210}}(z_{22}^i)^{i_{220}}  ,  &\mathcal{R}_0^{\eqref{sys:ode-2p2v}}>1\\
    & 1, &\mathcal{R}_0^{\eqref{sys:ode-2p2v}}<1,  
    \end{aligned}
    \right. 
\end{equation*}
    \begin{equation}\label{eq:proba-outbreak-2p2v}
    \mathbb{P}_{\text{outbreak}}^{\eqref{eq:pdf-2p2v}}=  1-\mathbb{P}_{\text{ext}}^{\eqref{eq:pdf-2p2v}}
\end{equation}

\section{Discussion}
\label{sec:discussion}
We formulated an SLIR model for the spread of $V$ pathogens between and within $P$ species.
The model was first formulated as a set of $2P(1+V)$ ordinary differential equations, of which we considered elementary properties: basic reproduction number $\R_0$ and local asymptotic stability as a function of the value of the reproduction number.
Global asymptotic stability when $\R_0<1$ was also established.
We then considered the corresponding continuous-time Markov chain (CTMC) model, which provides a better tool to study the behaviour of the system close to the disease-free equilibrium (DFE), which is our main interest here.
To do so, we employed a multitype branching process approximation of the CTMC near the DFE, obtaining an expression for the probability of an outbreak when $\R_0>1$.
This probability was interpreted, in the context of our model, as the probability that the pathogen becomes established in the population, at least temporarily. 
(The result is local and does not address the proper establishment at an endemic level.)

The case of a single pathogen spreading between two species was then investigated computationally. 
A metapopulation (spatial) version of this situation with a slightly different underlying SLIR model was investigated both mathematically and computationally in \cite{ArinoDavisHartleyJordanEtAl2005}, with, using the notation here, $P$ species present. 
However, focusing on just two species as we did here allows to get a better understanding of the processes.
To this effect, in particular, we investigated the sensitivity of the system to its parameters in the case of three viruses affecting fish leading to very different transmission scenarios.
These highlighted in particular the important role played by demographic parameters.
A fourth, more abstract case concerned introduction of a pathogen in a population by another in which the pathogen is present at an endemic level.

One interesting feature of the model is that despite its complication, pathogens function more or less independently from one another.
This was shown in the case $P=V=2$, which we considered next.
We conjecture that the following natural extension of Theorem~\ref{th:exist-mixed-EP} holds.
\begin{conjecture}\label{conj:exist-mixed-EP}
    Consider \eqref{sys:ODE-multi-species} in which pathogens $a=1,\ldots,k$ have $\R_{0a}<1$ and pathogens $e=k+1,\ldots,V$ have $\R_{0e}>1$, for some $k\in\{1,\ldots,n\}$.
    Then \eqref{sys:ODE-multi-species} has an unstable mixed equilibrium in which pathogens $1,\ldots,k$ are absent and pathogens $k+1,\ldots,V$ are present at an endemic level.
\end{conjecture}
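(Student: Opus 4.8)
The plan is to exploit the block decoupling of the infected subsystem that underlies the ``independence of pathogens'' observed for $P=V=2$, reducing the conjecture to three largely separate tasks: instability of the DFE, extinction of the subcritical pathogens, and construction of a coexistence equilibrium for the supercritical ones. Instability is immediate: by \eqref{eq:R0-multi-scpeies} one has $\R_0^{\eqref{sys:ODE-multi-species}}=\max_v \mathcal{R}_0^v$, so a single supercritical pathogen $e$ with $\R_{0e}=\mathcal{R}_0^e=\rho(\mathcal{B}_e)>1$ forces $\R_0^{\eqref{sys:ODE-multi-species}}>1$, and Lemma~\ref{lem:las-multi-species} makes the DFE unstable. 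Moreover the linearization of the infected variables at the DFE is block diagonal in the pathogen index, since $\partial\dot L_{pv}/\partial I_{qv'}=0$ unless $v=v'$ and $\partial\dot L_{pv}/\partial S_p=\sum_q\beta_{pqv}I_{qv}=0$ at the DFE; hence all unstable eigendirections lie inside the supercritical blocks, which both confirms instability and pins down the shape of the unstable manifold.

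Next I would handle extinction of the subcritical pathogens. Working inside the positively invariant region $\Omega$, where $\dot N_p=b_p-d_pN_p$ gives $S_p\le N_p\le b_p/d_p=S_p^0$, each subcritical pathogen $a\le k$ satisfies $\dot L_{pa}\le\sum_q\beta_{pqa}S_p^0 I_{qa}-(\varepsilon_{pa}+d_p)L_{pa}$ together with $\dot I_{pa}=\varepsilon_{pa}L_{pa}-(\gamma_{pa}+d_p)I_{pa}$. This is a cooperative linear comparison system whose coefficient pair is exactly the one generating $\mathcal{B}_a$; since $\R_{0a}=\rho(\mathcal{B}_a)<1$ its spectral abscissa is negative, so $(\llbracket L_{pa}\rrbracket,\llbracket I_{pa}\rrbracket)\to\b0$ by comparison. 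Every trajectory therefore converges to the invariant set on which pathogens $1,\ldots,k$ are absent, exactly the reducibility argument already invoked for Theorem~\ref{th:exist-mixed-EP}; on that set the flow is the full system carrying only the supercritical pathogens.

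The substance of the proof is the third step: producing an equilibrium of this reduced system at which pathogens $k+1,\ldots,V$ are \emph{simultaneously} endemic. The natural route is uniform persistence. I would show that the boundary flow (the DFE together with the faces where one or more supercritical pathogens vanish) admits an acyclic covering by isolated invariant sets, each a uniform repeller because its associated per-pathogen threshold exceeds one, and then invoke a persistence theorem of Thieme / Smith--Thieme type to conclude uniform persistence of $\sum_p I_{pe}$ for every $e>k$. Combined with dissipativity (the bound $N_p\le S_p^0$), a standard fixed-point argument for persistent dissipative semiflows would then yield an interior equilibrium with all subcritical coordinates zero and all supercritical infectious coordinates positive.

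The main obstacle is precisely this coexistence step, and it is a genuine mathematical risk rather than a technicality: the supercritical pathogens all draw on the same susceptible pools $S_1,\ldots,S_P$, so the reduced system is a multi-strain competition model in which competitive exclusion is the generic outcome. Positivity of all infectious components forces the $V-k$ scalar conditions $\rho(\mathcal{B}_e(S^\star))=1$ to hold at one common vector $S^\star=(\llbracket S_p^\star\rrbracket)$, which over-determines the $P$ susceptible unknowns as soon as $V-k>1$. I therefore expect the conjecture as stated to hold unconditionally only when $V-k=1$ (the setting of Theorem~\ref{th:exist-mixed-EP}, where no competition occurs), and otherwise to require a structural hypothesis that decouples the supercritical pathogens across host species---for example that the transmission matrices $[\beta_{pqe}]$ have essentially disjoint support in $(p,q)$, so that each supercritical pathogen effectively owns its own species and uniform persistence of all of them survives. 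Lacking such a hypothesis, the defensible statement is likely weaker: the DFE is unstable, the subcritical pathogens are globally eliminated, and at least the $\mathcal{R}_0^e$-maximal pathogen persists, with the precise coexisting subset determined by the competitive-exclusion structure.
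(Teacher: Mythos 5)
First, a point of calibration: the paper contains no proof of this statement --- it is posed as an open conjecture, and the only case actually proved is Theorem~\ref{th:exist-mixed-EP} ($P=V=2$, exactly \emph{one} supercritical pathogen), which Appendix~\ref{app:existence-mixed-EP} establishes by a hands-on planar argument: the equilibrium conditions for the single endemic pathogen reduce to two quadratics $\Gamma_1,\Gamma_2$ (hyperbolas) in the prevalences $(I_{11}^\star,I_{21}^\star)$ of that one pathogen in the two species, shown to intersect exactly once in the open positive quadrant. Your first two steps are correct and consistent with what the paper does assert: instability of the DFE follows from $\R_0^{\eqref{sys:ODE-multi-species}}=\max_v\mathcal{R}_0^v>1$ and Lemma~\ref{lem:las-multi-species}, and your cooperative linear comparison for the subcritical pathogens is a legitimate reconstruction of the paper's appeal to Theorem~\ref{th:GAS_R0} plus reducibility (Appendix~\ref{app:normal-form-matrices}). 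Your third step --- uniform persistence over an acyclic boundary covering, then a fixed point of the dissipative persistent semiflow --- is a reasonable programme for the genuinely open part, but it is only a programme: you do not verify the repeller/acyclicity hypotheses, and you yourself concede the step may fail. That is an honest assessment of a conjecture, not a proof of it; note also that the conic-intersection technique of Appendix~\ref{app:existence-mixed-EP} exploits $V-k=1$ and $P=2$ and does not obviously generalise, which is presumably why the authors stopped at a conjecture.

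The concrete error in your proposal is the degrees-of-freedom count behind your negative heuristic. The spectral conditions $\rho(\mathcal{B}_e(S^\star))=1$, $e=k+1,\ldots,V$, are $V-k$ scalar constraints on the $P$ unknowns $S_1^\star,\ldots,S_P^\star$, so over-determination sets in only when $V-k>P$, not ``as soon as $V-k>1$''. Moreover, the amplitudes of the $V-k$ endemic Perron eigenvectors are additional unknowns that enter the $P$ susceptible balance equations, so the full equilibrium system is square whenever $V-k\le P$: the correct competitive-exclusion heuristic here is ``no more coexisting pathogens than host species'' (the shared resource is the $P$-vector of susceptible pools), and for $P\ge 2$ two supercritical pathogens can generically satisfy both spectral conditions at a common $S^\star$. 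Your conclusion that the conjecture can hold unconditionally only for $V-k=1$ therefore overshoots, even though your underlying worry is sound for $V-k>P$ and does expose a real gap in the conjecture as stated (it imposes no relation between $V-k$ and $P$). Finally, note a mismatch you do not address: the conjecture attributes instability to the \emph{mixed equilibrium} itself, whereas your argument (like the paper's own loosely worded Theorem~\ref{th:exist-mixed-EP}) only yields instability of the DFE; instability of the coexistence equilibrium for several supercritical pathogens would need a separate linearisation argument that neither you nor the paper supplies.
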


By Theorem~\ref{th:GAS_R0}, those pathogens that are absent would naturally be globally asymptotically stably so.
This highlights a limitation of the model: because we assume that species dynamics is independent of the pathogens and that coinfections cannot occur, a situation as described by Theorem~\ref{th:exist-mixed-EP} or Conjecture~\ref{conj:exist-mixed-EP} is possible.
While not necessarily unrealistic, taking into account more advanced interactions between pathogens, or effects of pathogens onto their host species, could be interesting and lead to wholly different results.
Competition effects between species could also be incorporated and would also likely lead to different results.

Another interesting variation could involve considering an \emph{epidemic} model. 
In its current form, the model is an \emph{endemic} model, with the basic reproduction numbers distinguishing between a situation where the disease is absent and one where the disease is present at an endemic level.
In this case, of course, the pathogen always becomes extinct; it is not clear, though, if a situation similar to that in Theorem~\ref{th:exist-mixed-EP} and Conjecture~\ref{conj:exist-mixed-EP}, in this case with some pathogens undergoing an epidemic and others not, would hold.

Finally, note that we considered an SLIR model, not an SLIRS model, i.e., we assumed that acquired immunity is permanent.
With respect to the work carried out here, this makes very little difference.
Indeed, adding a flow from $R$ to $S$ does not modify the expression of the disease-free equilibrium nor of the basic reproduction number, which is the focus of most of our work.
The only difference between the two model formulations would appear, in the present work in the particular case where the pathogen is endemic in one species (Section~\ref{subsec:intro-from-endemic}).
There, \eqref{eq:EEP_E1} would be slightly modified, but this has no consequence since $E_1^\star$ is used as a parameter.
The expressions in Section~\ref{subsec:ODE-introduction-model} would also change, as some would incorporate the rate of movement from $R$ to $S$, but the overall conclusions remain.



\subsection*{Acknowledgements}
The authors thank Sharon Clouthier, Karen Dunmall and Collin Ghallagher, from the Freshwater Institute (Department of Fisheries and Oceans Canada, Winnipeg) for initiating the project and helpful discussions. 
They also thank anonymous reviewers whose comments greatly improved the manuscript.
JA acknowledges support from NSERC through the Discovery Grants program. This work was partially supported by a 2024/2025 MMI Research Accelerator Award and we gratefully acknowledge the Maud Menten Institute (PRN2).

\appendix 

\section{Normal forms of matrices}
\label{app:normal-form-matrices}
When considering the global asymptotic stability of the DFE in Appendix~\ref{app:proof_GAS_R0} or existence of threshold behaviour in Appendix~\ref{app:proof-exists-q-multi-pv}, we observe that the matrices involved are reducible.
While the presentation used in the body of the paper is the most natural, the proofs in these appendices require to use the normal form of the matrices involved.
Further understanding of the reproduction number is also gained by using the normal form. 
Let us illustrate this using matrix $A_{22}$ in Appendix~\ref{app:proof_GAS_R0}.
The permutation matrix obtained in the process is the same for all four matrices we consider here.

It is clear that matrix $A_{22}$ as given by \eqref{eq:matrix-A22-KSmethod} is reducible.
Indeed, consider the weighted loop-directed graph having $A_{22}$ as its adjacency matrix. 
Diagonal entries of the two main diagonal blocks correspond to loops. 
The blocks $\bm{\Psi}(\bx_S,\bx_I)$ and $\diag(\llbracket\varepsilon_{pv}\rrbracket)$, on the other hand, show that vertices form $PV$ strongly connected components, with each component comprising $2$ vertices.
Consider for instance the $(1,1)$ entry in $\bm{\Psi}$ and the $(1,1)$ entry in $\diag(\llbracket\varepsilon_{pv}\rrbracket)$. They establish that vertex 1 is connected to vertex $PV+1$ and that vertex $PV+1$ is connected to vertex 1. All entries in these matrices define similar pairs of vertices, giving the $PV$ strong components.
Ordering vertices (and corresponding matrix entries) so that vertices in a strong connected component are listed consecutively, it is then easy to find the permutation matrix $\bm{\Pi}$ such that
\begin{equation}\label{eq:A22-block-diag-form}
\bm{\Pi}\ A_{22}(\bx_S,\bx_I) \bm{\Pi}^T=
\bigoplus_{\llbracket pv\rrbracket}
\begin{pmatrix}
-(\varepsilon_{pv}+d_p) & \sum_{q=1}^P\beta_{qpv}S_q \\
\varepsilon_{pv} & -(\gamma_{pv}+d_p)
\end{pmatrix}.
\end{equation}
Applying $\bm{\Pi}$ to $D\bF$ given by \eqref{eq:DF} in Appendix~\ref{app:proof-exists-q-multi-pv}, we find
\begin{equation}\label{eq:DF_block}
\bm{\Pi}\ D\bF(\bu^\ell,\bu^i)\ \bm{\Pi}^T=
\bigoplus_{\llbracket pv\rrbracket}
\begin{pmatrix}
0 & \dfrac{\varepsilon_{pv}}{\varepsilon_{pv}+d_p} \\
S_p^0\dfrac{\beta_{pqv}u_{pv}^i}{\Lambda_{pv}} &
\dfrac{\sum_{q=1}^PS_q^0u_{qv}^\ell}{\Lambda_{pv}}
\end{pmatrix}.
\end{equation}

\section{Proof of Theorem~\ref{th:GAS_R0}}
\label{app:proof_GAS_R0}
To prove global asymptotic stability of the disease-free equilibrium of \eqref{sys:ODE-multi-species} when $\R_0<1$, we use a result of Kamgang and Sallet \cite[Theorems 4.3 and 4.5]{KAMGANG20081}.
For the convenience of the reader, we recall this result but with notation adapted to the problem under consideration here.

Let $\bx_S=(\llbracket S_p\rrbracket,\llbracket R_p\rrbracket)^T\in\IR^{2P}$ and $\bx_I=(\llbracket L_{pv}\rrbracket,\llbracket I_{pv}\rrbracket)^T\in\IR^{2PV}$ be, respectively, the vectors of non-infected compartments and infected compartments. 
Denote $\bx_S^0 =(\llbracket S_p^0\rrbracket,\llbracket 0\rrbracket)^T\in\IR^{2P}$ the part of the disease-free equilibrium of \eqref{sys:ODE-multi-species} corresponding to $\bx_S$, i.e., the DFE is $(\bx_S^0,\b0_{2PV})$.
Rewrite \eqref{sys:ODE-multi-species} in the following compact form,
\begin{equation}\label{sys:com-multi-species}
    \begin{aligned}
    \dot{\bx}_S &= A_{11}(\bx_S,\bx_I)(\bx_S-\bx_S^0) + A_{12}(\bx_S,\bx_I)\bx_I, \\
    \dot{\bx}_I &= A_{22}(\bx_S,\bx_I)\bx_I,
\end{aligned}
\end{equation}
with
\begin{align}
A_{11}(\bx_S, \bx_I)
&=
\begin{pmatrix}
-\diag\left(\llbracket d_p\rrbracket\right) & \b0 \\
\b0 & -\diag\left(\llbracket d_p\rrbracket\right)\\
\end{pmatrix}\in\IR^{2P\times 2P},  \nonumber
\\  
A_{12}(\bx_S, \bx_I)
&=
\begin{pmatrix}
\b0 & -\bm{\Phi}(\bx_S,\bx_I) \\
\b0 & \bm{\Gamma}(\bx_S,\bx_I) \\
\end{pmatrix}\in\IR^{2P\times 2PV}, \nonumber \\
A_{22}(\bx_S,\bx_I) &=
\begin{pmatrix}
    -\diag\left(\llbracket\varepsilon_{pv}+d_p\rrbracket\right) & 
    \bm{\Psi}(\bx_S,\bx_I) \\
    \diag\left(\llbracket\varepsilon_{pv}\rrbracket\right) &
    -\diag\left(\llbracket\gamma_{pv}+d_p\rrbracket\right)
\end{pmatrix}\in\IR^{2PV\times 2PV},
\label{eq:matrix-A22-KSmethod}
\end{align}
where $\Phi(\bx_S,\bx_I)$ is a $P\times PV$-matrix whose $j$th row is a $PV$-vector with entries $\llbracket S_j\beta_{jpv}\rrbracket$, $\bm{\Gamma}(\bx_S,\bx_I)$ is a $P\times PV$-matrix with $j$th row having $V$ nonzero entries $\gamma_{j1},\ldots,\gamma_{jV}$ in columns $(j-1)V+1$ to $jV$ and $\bm{\Psi}(\bx_S,\bx_I)$ is a diagonal $PV\times PV$-matrix,
\[
\bm{\Psi}(\bx_S,\bx_I) = \diag\left(\left\llbracket
\sum_{q=1}^P\beta_{qpv}S_q
\right\rrbracket\right),
\]
with the enumerator running over indices $p=1,\ldots,P$ and $v=1,\ldots,V$.

As discussed in Appendix~\ref{app:normal-form-matrices}, it is clear that matrix $A_{22}$ as given by \eqref{eq:matrix-A22-KSmethod} is reducible.
So we apply the method described in that Appendix and instead work with the similar normal form matrix \eqref{eq:A22-block-diag-form}.
We can apply \cite[Theorem 4.3]{KAMGANG20081} to each of the $PV$ blocks in \eqref{eq:A22-block-diag-form} or apply \cite[Theorem 4.5]{KAMGANG20081}, which considers the reducible case. 
We use a combination of the two results.

Let $\Omega \subset \mathbb{R}^{2P}_+ \times \mathbb{R}^{2PV}_+$ be the set defined in the statement of Theorem~\ref{th:GAS_R0}. 
If the following five conditions hold true, then \cite[Theorem 4.5]{KAMGANG20081} establishes that the disease-free equilibrium $(\bx_S^0,\b0_{2PV})$ is globally asymptotically stable when $\R_0<1$.
\begin{description}
    \item[C1] System~\eqref{sys:com-multi-species} is defined on a positively invariant set $\Omega$ of the nonnegative orthant and dissipative on $\Omega$. 
    \item[C2] Subsystem $\dot{\bx}_S = A_{11}(\bx_S,\b0)(\bx_S - \bx_S^0)$ is globally asymptotically stable at the disease-free equilibrium $\bx_S^0$ on the canonical projection of $\Omega$ on $\mathbb{R}^{2P}_+$. 
    \item[C3] Matrix $\tilde{A}_{22}(\bx_S,\bx_I)$ given by \eqref{eq:A22-block-diag-form} is block upper triangular, with each diagonal block $\hat{A}_{22}^{pv}$ being Metzler and irreducible for any $\bx=(\bx_S,\bx_I)\in\Omega$.
    \item[C4] For each $p,v$, there exists an upper-bound matrix $\hat{A}_{22}^{pv}$ for $M = \{A_{22}^{pv}(\bx) \in \IR^{2\times 2};\bx \in \Omega\}$ with the property that either $\hat{A}_{22}^{pv} \not\in M$ or if $\hat{A}_{22}^{pv} \in M$, (i.e., $\hat{A}_{22} = \max_{\Omega} M$), then for any $\hat{\bx} \in \Omega$ such that $\hat{A}_{22}= A_{22}(\hat{\bx})$, $\hat{\bx} \in \mathbb{R}^{2P}_+ \times \{0\}^{2PV}$.
    \item[C5] The spectral abscissa of the matrix $\hat{A}_{22}^{pv}$ verifies $\sigma(\hat{A}_{22}^{pv}) \leq 0$ when $\mathcal{R}_0^{\eqref{sys:ODE-multi-species}}\leq 1$.
\end{description}
The right-hand side of \eqref{sys:com-multi-species} is of class $C^1$ on the open set $\mathbb{R}^{2P}_+ \times \mathbb{R}^{2PV}_+$, so solutions are defined.
Solutions in $\Omega$ remain in $\Omega$.
Furthermore, extending $\Omega$ as
\begin{multline*}
    \Omega_\delta = \Biggl\{
    (\llbracket S_p\rrbracket,\llbracket L_{pv}\rrbracket,\llbracket I_{pv}\rrbracket,\llbracket R_p\rrbracket) \in \mathbb{R}^{2P(V+1)} : \\
    N_p=S_P+\sum_{v=1}^{V}(L_{pv}+I_{pv})+R_p \leq \frac{b_p}{d_p}+\delta; \quad p=1,\ldots, P
    \Biggr\},
\end{multline*}
we have that solutions with initial conditions in $\IR^{2P(V+1)}$ eventually enter $\Omega_\delta$, for any $\delta>0$.
As a consequence, \eqref{sys:com-multi-species} is dissipative and \textbf{C1} is satisfied.
Then note that in the original form, the model without disease is, for $p=1,\ldots,P$,
\begin{subequations}\label{sys:EDO-multi-species-no-disease}
    \begin{align}
        \dot S_p &= b_p-d_pS_p, \label{sys:multi-species-no-disease_dS} \\
        \dot R_p &= -d_pR_{pv}. \label{sys:multi-species-no-disease_dR} 
    \end{align}
\end{subequations}
Thus it is clear that for each $p=1,\ldots,P$, $S_p(t)\to b_d/d_p$ and $R_p(t)\to 0$ regardless of initial conditions, meaning that condition \textbf{C2} holds.
Matrix \eqref{eq:A22-block-diag-form} is in block-diagonal form and as a consequence, is block upper triangular. Recall that a matrix is Metzler if its offdiagonal entries are nonnegative, so \textbf{C3} holds.

To verify that \textbf{C4} holds, remark that for solutions in $\Omega$, the maximal value for a given $S_p$ is attained when $S_p=N_p$, i.e., at the disease-free equilibrium $\bx_S^0$. 
In other words, for $\tilde{\bx} \in \mathbb{R}^{2P}_+ \times \{0\}^{2PV}$, $\tilde{\bx}_S \leq\bx_S^0$. 
Thus, the upper-bound  matrix is $\hat{A}_{22} = \hat{A}_{22}(\bx_S^0,\b0)=\mathbf{G}-\mathbf{W}$.
Note that the result can also be formulated using blocks $\hat{A}_{22}^{pv}$ as in \eqref{eq:A22-block-diag-form} and selecting the relevant components in $\bm{\Pi}(\mathbf{G}-\bW)\bm{\Pi}^T$, but we do need the form using the unreduced matrix $\hat{A}_{22}$ to show that condition $\textbf{C5}$ holds.

Indeed, to show \textbf{C5}, return to the unreduced form \eqref{eq:matrix-A22-KSmethod} and note that we have $\sigma(\hat{A}_{22}) = \sigma(\mathbf{G} - \mathbf{W})$, with the matrices as defined in Section~\ref{sec:basic_analysis_ODE} and satisfying the conditions of \cite[Theorem 2]{VdDWatmough2002}.
The proof of that theorem establishes that $\R_0<1$, i.e., the spectral radius $\rho(\mathbf{G}\bW^{-1})<1$, is equivalent to the spectral abscissa $\sigma(\mathbf{G}-\bW)<0$.
As a consequence, when $\R_0<1$, then $\sigma(\hat{A}_{22})<0$ and the same is true for each diagonal block $\hat{A}_{22}^{pv}$ in the matrix in normal form \eqref{eq:A22-block-diag-form}, since it is similar to \eqref{eq:matrix-A22-KSmethod}.

Since conditions \textbf{C1}--\textbf{C5} are satisfied, the proof is done.
\section{Proof of threshold behaviour in Theorem~\ref{th:exists-q-multi-pv}}
\label{app:proof-exists-q-multi-pv}
To continue the proof of Theorem~\ref{th:exists-q-multi-pv}, we now need to consider the existence of fixed points of the p.g.f. $\bF$.
The result will follow from application of the Threshold Theorem \cite{ALLEN201399} together with \cite[Theorem 7.1]{Harris1963}, provided we can show that the branching process is positive and regular.

The Jacobian of~\eqref{sys:pgf-multi-species}  is  the $2PV\times 2PV$ block matrix
\begin{equation}\label{eq:DF}
D\bF\left(\bu^\ell,\bu^i\right)
=
\begin{pmatrix}
\begin{array}{cc}
\bm{0} & \mathbb{M}_{12} \\
\mathbb{M}_{21} & \mathbb{M}_{22} \\
\end{array}
\end{pmatrix},    
\end{equation}
where each block has size $PV\times PV$. First,
\[
\mathbb{M}_{12}=\diag\left(\left\llbracket\frac{\varepsilon_{pv}}{\varepsilon_{pv}+d_p}\right\rrbracket\right),\; 
\mathbb{M}_{22}=\diag\left(\left\llbracket\frac{\sum_{q=1}^{P}S_{q}^0 u_{qv}^\ell}{\Lambda_{pv}}\right\rrbracket\right).
\]
Then, the $PV\times PV$-matrix $\mathbb{M}_{21}$ is itself a block matrix, with each $V\times V$ sized block taking the form, for $p,q\in\{1,\ldots,P\}$,
\[
\mathcal{K}_{pq}=\displaystyle S_{q}^0\diag\left(\left\llbracket \frac{\beta_{pqv}u_{pv}^i}{\Lambda_{pv}} \right\rrbracket\right).
\]

As established in Appendix~\ref{app:normal-form-matrices}, \eqref{eq:DF} is reducible.
Using the similarity transformation in Appendix~\ref{app:normal-form-matrices}, we derive the normal form of $D\bF$, \eqref{eq:DF_block}.
As in the proof in Appendix~\ref{app:proof_GAS_R0}, it is useful to use both the original matrix \eqref{eq:DF} and its normal form \eqref{eq:DF_block}.

\begin{enumerate}
\item [(i)]
It is clear that \eqref{eq:DF} is such that if $\bx,\by\in[0,1)^{2PV}$ are such that $\bx\leq\by$, one has $D\bF(\bx)\leq D\bF(\by)$.
Indeed, terms $\bu^\ell$ and $\bu^i$ appear as sums in the numerators of the expressions involving them.
Furthermore, $\bF(\b0)>\b0$, i.e., it is a nonnegative matrix with some positive entries.
This implies that the multitype branching processes are not singular \cite[Theorem 2.3]{berman1979nonnegative}.

\item[(ii)] The matrix of first moments is $\mathbb{M}=D\bF(\mathds{1}_{2PV})$, where
\(\mathds{1}_{2PV}\) is the unit column vector of size $2PV$.
In the transformed matrix \eqref{eq:DF_block}, diagonal blocks of $\mathbb{M}$ take the form
\[
\begin{pmatrix}
0 & \dfrac{\varepsilon_{pv}}{\varepsilon_{pv}+d_p} \\
S_p^0\dfrac{\beta_{pqv}}{\Lambda_{pv}} &
\dfrac{\sum_{q=1}^PS_q^0}{\Lambda_{pv}}
\end{pmatrix}
\]
and are therefore irreducible (and even primitive).
Consequently, the matrix of first moments $\mathbb{M}$ is block-primitive.
\end{enumerate}
From $(i)$ and $(ii)$, we conclude that the branching process is positive and regular.
As a consequence, applying the Threshold Theorem \cite{ALLEN201399} together with \cite[Theorem 7.1]{Harris1963} to each of the diagonal blocks in the matrix in normal form \eqref{eq:DF_block}, gives the threshold behaviour, with existence of a fixed point $(0,0)<(z_{pv}^\ell,z_{pv}^i)<(1,1)$ additionally to $(z_{pv}^\ell,z_{pv}^i)=(1,1)$ when the process is supercritical. 
Putting things together, under the conditions of Theorem~\ref{th:exists-q-multi-pv}, there exists an additional fixed point $\b0<\bz<\bm{1}$ when $\R_0>1$.

\section{Existence of a mixed equilibrium for \eqref{sys:ode-2p2v}}
\label{app:existence-mixed-EP}
Suppose that the conditions of Theorem~\ref{th:exist-mixed-EP} are satisfied: $P=V=2$, virus 2 is at the disease-free equilibrium (DFE), i.e., $L_{12}=L_{22}=I_{12}=I_{22}=0$ and $\R_{02}<1$.
We seek equilibria of \eqref{sys:ode-2p2v} with positive values for $L_{11}^\star$, $I_{21}^\star$, $I_{11}^\star$ and $I_{21}^\star$, under the assumption that $\R_{01}>1$.

Substituting the DFE of species 2 into \eqref{sys:ode-2p2v}, we obtain from \eqref{sys:ode-2p2v_Lq1} that
\[
S_1^\star=\frac{\varepsilon_{11}+d_1}{\beta_{111} I_{11}^\star+\beta_{121} I_{21}^\star} L_{11}^\star, \quad S_2^\star=\frac{\varepsilon_{21}+d_2}{\beta_{211} I_{11}^\star+\beta_{221} I_{21}^\star}L_{21}^\star,
\]
while \eqref{sys:ode-2p2v_Iq1} gives
\[
L_{11}^\star=\frac{\gamma_{11}+d_1}{\varepsilon_{11}} I_{11}^\star, \quad L_{21}^\star=\frac{\gamma_{21}+d_2}{\varepsilon_{21}} I_{21}^\star
\]
and, finally, from \eqref{sys:ode-2p2v_R},
\[
R_1^\star=\frac{\gamma_{11}}{d_1} I_{11}^\star, \quad R_2^\star=\frac{\gamma_{21}}{d_2} I_{21}^\star.
\]
Since the total population of each species is governed, for $i=1,2$, by $\dot N_i=b_i-d_iN_i$, at an equilibrium, $b_i-d_iN_1^\star=0$ and thus, when \eqref{sys:ode-2p2v} is at an equilibrium with virus 2 at the DFE, one has
\[  
b_1-d_1\left(S_1^\star+L_{11}^\star+I_{11}^\star+R_1^\star\right)=0 
\quad \text{and} \quad 
b_2-d_2\left(S_2^\star+L_{21}^\star+I_{21}^\star+R_2^\star\right)=0.
\]

Expressing all terms as functions of $I_{i1}$, $i=1,2$, and using the expressions of $S_1$, $L_{11}$, $R_1$, $S_2$, $L_{21}$ and $R_2$ gives
\begin{subequations}\label{sys:sol_ode_I}
    \begin{align}
        \beta_{111} (I_{11}^\star)^2-\left[d_1\left(\mathcal{R}_{01}-1\right)-\beta_{121}I_{21}^\star\right] I_{11}^\star-\frac{\beta_{121}}{\beta_{111}} d_1 \mathcal{R}_{01} I_{21}^\star=0,\label{sys:sol_ode_I11}\\
        \beta_{221} (I_{21}^\star)^2-\left[d_2\left(\mathcal{R}_{02}-1\right)-\beta_{211} I_{11}^\star\right] I_{21}^\star-\frac{\beta_{211}}{\beta_{221}} d_2 \mathcal{R}_{02} I_{11}^\star=0. \label{sys:sol_ode_I21}
    \end{align}
\end{subequations}

To simplify computations, let us denote $x=I_{11}^\star$ and $y=I_{21}^\star$. Then \eqref{sys:sol_ode_I} can be written as
\begin{subequations}\label{sys:sol_ode_I_xy}
    \begin{align}
        \Gamma_1(x,y) &:= 
        \beta_{111}x^2+\beta_{121}xy
        -d_1\left(\mathcal{R}_{01}-1\right)x
        -\frac{\beta_{121}}{\beta_{111}} d_1 \mathcal{R}_{01}y=0,
        \label{sys:sol_ode_I_xy_x}\\
        \Gamma_2(x,y) & :=
        \beta_{211} xy
        +\beta_{221} y^2
        -\frac{\beta_{211}}{\beta_{221}} d_2 \mathcal{R}_{02} x
        -d_2\left(\mathcal{R}_{02}-1\right)y=0. 
        \label{sys:sol_ode_I_xy_y}
    \end{align}
\end{subequations}
Both $\Gamma_1$ and $\Gamma_2$ are conic sections.
Their discriminants $-\beta_{121}^2$ and $-\beta_{211}^2$ are both negative, hence they are both hyperbolas.
As \eqref{sys:sol_ode_I_xy_x} has no second degree $y$ monomial, one of its asymptotes is vertical. 
Likewise, since \eqref{sys:sol_ode_I_xy_y} has no second degree $x$ monomial, one of its asymptotes is horizontal.
If both of these asymptotes intersect the positive (or first) quadrant $Q_1=\IR_+\times\IR_+$, then $\Gamma_1$ and $\Gamma_2$ intersect a single time there.
The situation is shown in Figures~\ref{fig:Gamma1-Gamma2} and \ref{fig:Gamma1-Gamma2-zoom}.

\begin{figure}[htbp]
    \centering
    \includegraphics[width=0.65\textwidth]{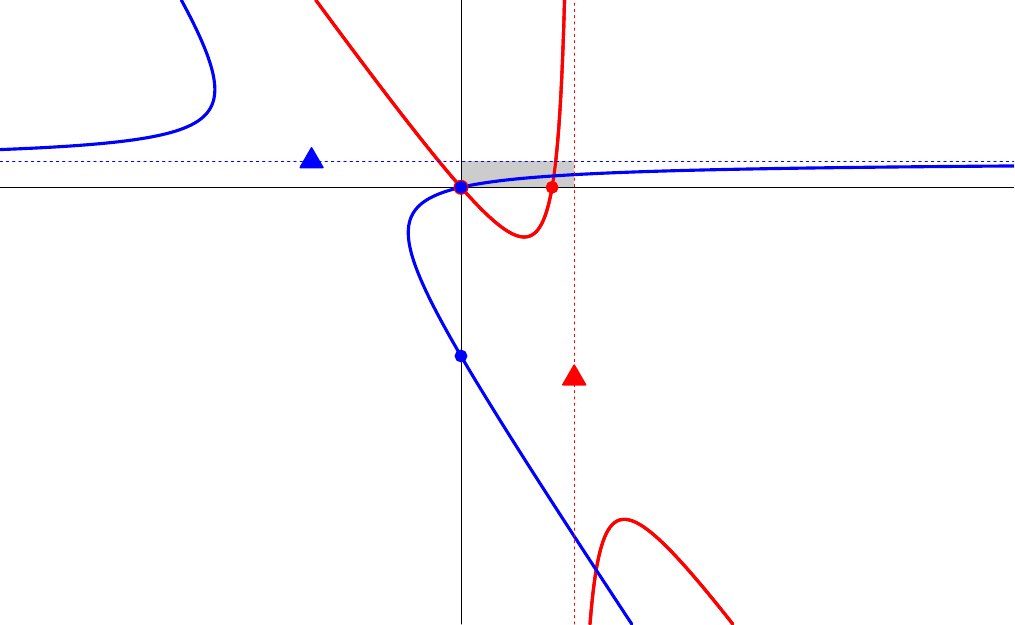}
    \caption{Situation leading to the existence of a mixed equilibrium. Red curve: $\Gamma_1$. Blue curve: $\Gamma_2$. The centres of the hyperbolas are shown as triangles of corresponding colours, as are the vertical and horizontal asymptotes relevant to the problem. The shaded box shows the possible range of values of the endemic component of the mixed equilibrium.}
    \label{fig:Gamma1-Gamma2}
\end{figure}

Let us show that this is indeed the case.
There are two ingredients:
\begin{enumerate}
    \item The curves $\Gamma_1$ and $\Gamma_2$ intersect $Q_1$.
    \item The vertical asymptote of $\Gamma_1$ and the horizontal asymptote of $\Gamma_2$ intersect $Q_1$.
\end{enumerate}

First, note that the point of intersection $(x,y)=(0,0)$ is obvious since neither \eqref{sys:sol_ode_I_xy_x} nor \eqref{sys:sol_ode_I_xy_y} have terms of degree 0. 
Now consider the $x$- and $y$-intercepts of $\Gamma_1$ and $\Gamma_2$. 
For $\Gamma_1$, $x$-intercepts satisfy
\[
\Gamma_1(x,0) = \left(\beta_{111}x-d_1(\R_{01}-1)\right)x,
\]
i.e., $x=0$ and $x=d_1(\R_{01}-1)/\beta_{111}>0$ by the assumption $\R_{01}>1$. 
For $y$-intercepts,
\[
\Gamma_1(0,y) = -\frac{\beta_{121}}{\beta_{111}}d_1\R_{01}y,
\]
i.e., $\Gamma_1$ intercepts the $y$-axis only at the origin.
For $\Gamma_2$, $x$-intercepts are given by
\[
\Gamma_2(x,0) = -\frac{\beta_{211}}{\beta_{221}}d_2\R_{02}x,
\]
giving only the origin as $x$-intercept, while $y$-intercepts are given by
\[
\Gamma_2(0,y) = \left(\beta_{221}y-d_2(\R_{02}-1)\right)y,
\]
i.e., $y$-intercepts are $y=0$ and $y=d_2(\R_{02}-1)/\beta_{221}<0$ by assumption.

We can therefore establish that $\Gamma_1$ and $\Gamma_2$ intersect $Q_1$. Indeed, from the gradients $\nabla\Gamma_1(x,y)$ and $\nabla\Gamma_2(x,y)$, 
we deduce that vectors tangent to $\Gamma_1$ and $\Gamma_2$ are
\begin{align*}
    T_1(x,y)=\left(
        \beta_{121}x -\frac{\beta_{121}}{\beta_{111}}d_1\R_{01},
        d_1(\R_{01}-1)-2\beta_{111}x-\beta_{121}y
    \right), \\
    T_2(x,y)\left( 
        2\beta_{221}y +\beta_{211}x -d_2(\R_{02}-1),
        \frac{\beta_{211}}{\beta_{221}}d_2\R_{02}-\beta_{211}y
    \right).
\end{align*}
At the origin, $T_1(0,0)=(-\beta_{121}d_1\R_{01}/\beta_{111},d_1(\R_{01}-1))$ has signs $(-,+)$. This means that $\Gamma_1$ ``moves'' through origin from the second quadrant $Q_2=\IR_-\times\IR_+$ when $x<0$ to the fourth quadrant $Q_4=\IR_+\times\IR_-$ when $x>0$.
On the other hand, $T_2(0,0)=(-d_2(\R_{02}-1),\beta_{211}d_2\R_{02}/\beta_{221})$ has signs $(+,+)$, implying that left of the $y$-axis, $\Gamma_2$ is in the third quadrant $Q_3=\IR_-\times\IR_-$, while it is in $Q_1$ when $x>0$.

It remains to show that the vertical and horizontal asymptotes of $\Gamma_1$ and $\Gamma_2$, respectively, intersect the positive quadrant $Q_1$.

The centre of $\Gamma_1$ is $(d_1\R_{01}/\beta_{111},-d_1(\R_{01}+1)/\beta_{121})\in Q_4$ (red triangle in Figure~\ref{fig:Gamma1-Gamma2}).
Since $\Gamma_1$ intersects $Q_2$, the asymptote to $\Gamma_1\cap Q_2$ has negative slope. 
It follows that the vertical asymptote to $\Gamma_1$ is the one to $\Gamma_1\cap Q_1$ and therefore, intersects $Q_1$.
Reasoning similarly, observe that since the centre $(-d_2(\R_{02}+1)/\beta_{211},d_2\R_{02}/\beta_{221})$ of $\Gamma_2$ lies in $Q_2$ (blue triangle in Figure~\ref{fig:Gamma1-Gamma2}) and $\Gamma_2$ intersects $Q_4$, the asymptote to $\Gamma_2\cap Q_4$ has negative slope and that to the part of $\Gamma_2\cap Q_1$ is horizontal.

As a consequence, there is a point in the interior of the positive quadrant $Q_1$ where $\Gamma_1$ intersects $\Gamma_2$.
More precisely, remark that since the centres of the hyperbola lie on the vertical and horizontal asymptotes, respectively, the $x$-coordinate of the point of intersection cannot be larger that the $x$-component of the centre of $\Gamma_1$ and its $y$-coordinate cannot exceed the $y$-component of the centre of $\Gamma_2$.
This means that the endemic equilibrium belongs to the box
\begin{equation}
    \left(
    0,
    \frac{d_1\R_{01}}{\beta_{111}}
    \right]\times
    \left(0,
    \frac{d_2\R_{02}}{\beta_{221}}
    \right]
\end{equation}
shown shaded in Figures~\ref{fig:Gamma1-Gamma2} and~\ref{fig:Gamma1-Gamma2-zoom}.

\begin{figure}[htbp]
    \centering
    \includegraphics[width=0.65\textwidth]{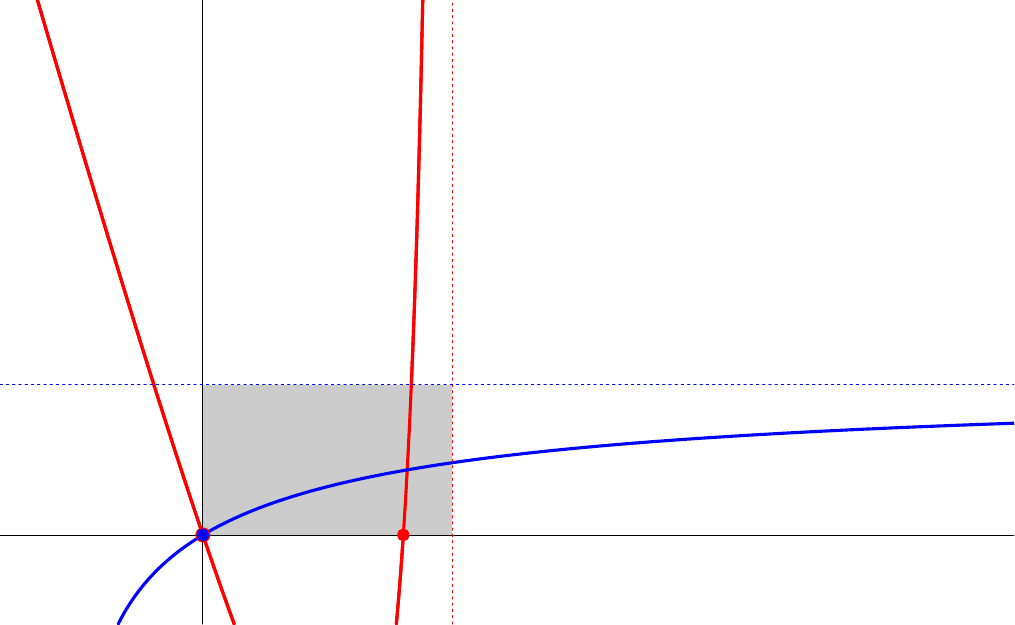}
    \caption{Situation leading to the existence of a mixed equilibrium. Zoom on Figure~\ref{fig:Gamma1-Gamma2} focusing on the positive quadrant. $\Gamma_1$ and its related features is shown in red, while $\Gamma_2$ is in blue.}
    \label{fig:Gamma1-Gamma2-zoom}
\end{figure}

\bibliographystyle{plain}
\bibliography{references}

\end{document}